\newif\iffull
\definecolor{debianred}{rgb}{0.84, 0.04, 0.33}
\newcommand{\jnote}[1]{{\textcolor{blue}{Jakub: #1}}}
\newcommand{\smnote}[1]{{\textcolor{red}{Sepideh: #1}}}
\newtheorem{theorem}{Theorem}[section]
\newtheorem{lemma}[theorem]{Lemma}
\newtheorem{fact}[theorem]{Fact}
\newtheorem{definition}[theorem]{Definition}
\newenvironment{proof}{\par\noindent\textit{Proof.}}{$\Box$\par\bigskip\par}
\newenvironment{proofsketch}{\par\noindent\textit{Proof sketch.}}{$\Box$\par\bigskip\par}
\newenvironment{proofof}[1]{\par\noindent\textit{Proof of #1.}}{$\Box$\par\bigskip\par}
\newenvironment{proofsketchof}[1]{\par\noindent\textit{Proof sketch of #1.}}{$\Box$\par\bigskip\par}
\crefname{theorem}{Theorem}{Theorems}
\Crefname{fact}{Fact}{Facts}
\Crefname{lemma}{Lemma}{Lemmas}
\Crefname{claim}{Claim}{Claims}
\Crefname{observation}{Observation}{Observations}
\Crefname{invariant}{Invariant}{Invariants}
\newcommand{\bR}{\mathbb{R}}
\newcommand{\bE}{\mathbb{E}}
\newcommand{\bP}{\mathbb{P}}
\newcommand{\cI}{\mathcal{I}}
\newcommand{\cC}{\mathcal{C}}
\newcommand{\cU}{\mathcal{U}}
\newcommand{\OPT}{\operatorname{OPT}}
\newcommand{\Hypergeometric}{\operatorname{Hypergeometric}}
\newcommand{\OPTMatInt}{\OPT_{\mathrm{MatInt}}}
\newcommand{\eps}{\varepsilon}
\newcommand{\ceil}[1]{\lceil #1 \rceil}
\newcommand{\floor}[1]{\lfloor #1 \rfloor}
\newcommand{\rb}[1]{\left( #1 \right)}
\newcommand{\sqb}[1]{\left[ #1 \right]}
\newcommand{\fv}{\operatorname{fav}}
\newcommand{\infullversion}{in the full version (see the supplementary material)}
\title{Improved Algorithms for Fair Matroid\\Submodular Maximization \iffull \\ (full version)\fi}
\author{%
  Sepideh Mahabadi \\
  Microsoft Research\\
  \texttt{smahabadi@microsoft.com} \\
  \And
  Sherry Sarkar\thanks{Part of this work was done while an intern at Microsoft Research.}\\
  Carnegie Mellon University\\
  \texttt{sherrys@andrew.cmu.edu} \\
  \AND
  Jakub Tarnawski \\
  Microsoft Research\\
  \texttt{jakub.tarnawski@microsoft.com} 
}
\begin{document}

\maketitle

\begin{abstract}
Submodular maximization subject to matroid constraints is a central problem with many applications in machine learning. As algorithms are increasingly used in decision-making over datapoints with sensitive attributes such as gender or race, it is becoming crucial to enforce fairness to avoid bias and discrimination. Recent work has addressed the challenge of developing efficient approximation algorithms for fair matroid submodular maximization. However, the best algorithms known so far are only guaranteed to satisfy a relaxed version of the fairness constraints that loses a factor 2, i.e., the problem may ask for $\ell$ elements with a given attribute, but the algorithm is only guaranteed to find $\lfloor \ell/2 \rfloor$. In particular, there is no provable guarantee when $\ell=1$, which corresponds to a key special case of perfect
\iffull matching constraints. \else matching. \fi

In this work, we achieve a new trade-off via an algorithm that gets arbitrarily close to full fairness. Namely, for any constant $\varepsilon>0$, we give a constant-factor approximation to fair monotone matroid submodular maximization that in expectation loses only a factor $(1-\varepsilon)$ in the lower-bound fairness constraint. Our empirical evaluation on a standard suite of real-world datasets --
\iffull including \fi
clustering, recommendation, and coverage tasks -- demonstrates the practical effectiveness of our methods.
\end{abstract}

The code for the paper is available at \url{https://github.com/dj3500/fair-matroid-submodular-neurips2025}.

\section{Introduction}

Machine learning is increasingly deployed in high-stakes decision-making, raising concerns about the propagation of bias and unfairness in automated systems. These challenges are especially acute in domains such as education, law enforcement, hiring, and credit~\cite{executive2016,blueprint22,reportEU22}. In response, a growing body of research has focused on developing algorithms that incorporate fairness constraints for core problems including clustering~\cite{Chierichetti0LV17}, data summarization~\cite{CelisKS0KV18}, classification~\cite{ZafarVGG17}, voting~\cite{CelisHV18}, and ranking~\cite{CelisSV18}. 

This paper studies fairness in the context of monotone submodular maximization subject to matroid constraints. Submodular functions, which capture the principle of diminishing returns, are fundamental to a range of machine learning applications such as recommender systems~\cite{El-AriniG11}, feature selection~\cite{DasK11}, active learning~\cite{GolovinK11}, and data summarization~\cite{LinB11}. Matroids provide a general framework for modeling independence constraints, encompassing cardinality, partition, graph connectivity, and linear independence constraints.

While numerous fairness definitions have been proposed, we adopt a widely used group fairness model, which partitions the universe into \emph{disjoint} groups
and
enforces \emph{lower and upper} bounds on the representation of each sensitive group in the selected set.
See \cref{sec:prelim-fmmsm} for a precise definition.
This model generalizes several fairness notions, such as proportional representation~\cite{monroe1995fully, Brill2017}, diversity constraints~\cite{cohoon2013, biddle2006adverse}, and statistical parity~\cite{dwork2012fairness}. It has been used for both submodular maximization \cite{CelisSV18, CelisHV18, HalabiMNTT20, el2023fairness, WFM21, tang2023beyond, yuan2023group, el2024} as well as a multitude of other optimization problems, such as clustering \cite{Chierichetti0LV17, KAM19, JNN20, HMV23}, voting \cite{CelisHV18}, data summarization \cite{CelisKS0KV18}, matching~\cite{Chierichetti0LV19} or ranking \cite{CelisHV18}.

In the absence of fairness constraints, monotone submodular maximization under a single matroid constraint is very well understood, as a tight $(1-1/e) \approx 0.63$-approximation is achievable~\cite{CalinescuCPV11,Feige98}.
The intersection of two matroid constraints
(which we refer to as ``matroid intersection'')
admits an almost $0.5$-approximation~\cite{Lee2010}.
The fair variant has been primarily explored under cardinality constraints~\cite{CelisHV18}, where a tight $(1-1/e)$-approximation is also known.
In the (single-pass) streaming setting, there is a $0.3178$-approximation~\citep{FeldmanLNSZ22}
for the non-fair matroid version;
furthermore, since the intersection of cardinality constraint and fairness can be reduced to a single matroid constraint~\cite{HalabiMNTT20},
the same approximation factor can be obtained for it.

However, the intersection of a matroid constraint and a fairness constraint
seems significantly more challenging,
and is still poorly understood despite
two recent works devoted to studying this problem
in the streaming~\cite{el2023fairness} and the classic offline~\cite{el2024} settings;
our focus is on the latter.
Following \cite{el2023fairness}, we refer to the problem  as Fair Matroid Monotone Submodular Maximization (\textbf{FMMSM}).
To appreciate its difficulty,
consider a key special case,
Monotone Submodular Perfect Matching (\textbf{MSPM}),
i.e., maximizing a monotone submodular function
over the collection of all \emph{perfect} matchings in a \emph{bipartite} graph $(V_G,E_G)$.\footnote{To see why MSPM is a special case of FMMSM, set $E_G$ as the universe, consider a partition matroid that encodes that every vertex on the left shall have degree at most 1 in the solution, and set fairness constraints so that every vertex on the right shall have degree at least 1 and at most 1.}
This collection of feasible sets is not downward-closed,
which invalidates known algorithmic approaches.\footnote{Of course, a proper subset of a perfect matching is not a perfect matching. But more importantly, the collection of all \emph{subsets of perfect matchings} (which is downward-closed) does not belong to any of the families that are known to make approximate submodular maximization tractable. In particular, it is not a matroid, an intersection of
\iffull a small number of \else few \fi
matroids, or a so-called $p$-extendible set system or a $p$-system~\cite{CalinescuCPV11} for $p=O(1)$.}
The best known approximation factor
for MSPM is a trivial $O(|V_G|)$-approximation;
one can also apply the framework of~\cite{GoemansHIM09} to obtain an
\iffull
$\widetilde{O}(\sqrt{|E_G|})$-approximation\footnote{
    The work~\cite{GoemansHIM09} shows that we can in polynomial time obtain numbers $c_e$ for $e \in E_G$
    such that for any $S \subseteq E_G$, the function $\hat{f}(S) := \sqrt{\sum_{e \in S} c_e}$ is an $\widetilde{O}(\sqrt{|E_G|})$-approximation to $f(S)$.
    Maximizing $\hat{f}(S)$ amounts to maximizing $\sum_{e \in S} c_e$,
    which is the maximum-weight bipartite perfect matching problem,
    solvable in polynomial time.
},
\else
$\widetilde{O}(\sqrt{|E_G|})$-approximation,
\fi
which is superior for sparse graphs.
In fact, this could possibly even be tight,
as it
almost matches a surprising
negative result of~\cite{el2024}
who showed a family of sparse graphs
where the standard \emph{multilinear relaxation}
(commonly used in relax-and-round approaches for submodular optimization)
has an integrality gap of $\Omega(\sqrt{|E_G|})$.
The existence of a constant-factor approximation
to MSPM was posed by~\cite{el2024} as an exciting open problem.

The algorithms given in~\cite{el2023fairness,el2024} for FMMSM
circumvent the difficulty posed by the lower bound constraints by relaxing them.
They obtain the following two results:
\begin{theorem}[Two-pass algorithm of \cite{el2023fairness}] \label{thm:streaming}
    There is a polynomial-time algorithm for FMMSM that violates lower bound constraints by a factor 2 and obtains $\alpha/2$-approximation, where $\alpha$ is the approximation ratio of an algorithm for maximizing a monotone submodular function under a matroid intersection constraint.
\end{theorem}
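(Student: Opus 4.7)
The plan is to handle upper and lower bound fairness constraints separately: fold the upper bounds into a matroid intersection constraint (for which the $\alpha$-approximation black box applies), and use the two passes together with a halving argument to accommodate the relaxed lower bounds. The starting observation is that the upper-bound fairness constraints, ``at most $u_i$ elements from group $G_i$,'' form a partition matroid $P_u$. Together with $M$ this is a matroid intersection, and $\OPT$ is feasible for $M \cap P_u$, so the black box yields a set of value at least $\alpha \cdot f(\OPT)$ under these constraints alone.

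The lower bounds are handled by a halving observation about $\OPT$. Split each $\OPT \cap G_i$ into two nearly-equal halves to obtain a partition $\OPT = A \sqcup B$ such that each of $A, B$ lies in $M$ (since $\OPT \in M$ and matroids are downward-closed), respects every $u_i$, and meets the relaxed lower bound $\lfloor \ell_i / 2 \rfloor$ (since $\lfloor |\OPT \cap G_i| / 2 \rfloor \geq \lfloor \ell_i / 2 \rfloor$). By monotone submodularity, $f(A) + f(B) \geq f(A \cup B) + f(A \cap B) \geq f(\OPT)$, so the better half has value at least $f(\OPT)/2$. Hence the instance with relaxed lower bounds $\lfloor \ell_i/2 \rfloor$ has optimum at least $f(\OPT)/2$.

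The two-pass scheme I would then deploy is: in pass 1, use a matroid-feasibility routine (e.g., based on matroid partition/intersection with the partition matroid induced by the relaxed $\lfloor \ell_i/2 \rfloor$ bounds) to identify a ``fairness skeleton'' $F \in M$ with $|F \cap G_i| \geq \lfloor \ell_i/2 \rfloor$; in pass 2, extend $F$ by running the $\alpha$-approximation for submodular maximization on the contracted matroid $M/F$ intersected with $P_u$ restricted to the residual budgets $u_i - |F \cap G_i|$, maximizing the residual function $f(\cdot \mid F)$. The returned set $F \cup T$ satisfies the relaxed lower bounds by construction of $F$, satisfies the upper bounds by the matroid intersection, and lies in $M$ by the contraction. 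Combining the $\alpha$-guarantee of the black box with the halving bound gives $f(F \cup T) \geq \alpha \cdot f(\OPT)/2$.

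The main obstacle I anticipate is tying the pass-2 guarantee to $f(\OPT)/2$ rather than to some smaller quantity depending on $F$: one must exhibit a feasible set in $(M/F) \cap P_u^{\text{res}}$ with residual value at least $f(\OPT)/2$, which calls for a matroid-exchange argument between the skeleton $F$ and the ``other half'' of $\OPT$ guaranteed by the halving observation. A secondary subtlety is making the skeleton construction in pass 1 compatible with $M$ in a streaming fashion, which likely needs a clean reduction of ``matroid plus relaxed group lower bounds'' to a single matroid (achievable because lower-bounded partition constraints with enough slack behave well under matroid union).
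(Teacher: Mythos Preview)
This theorem is quoted from \cite{el2023fairness} and not proved in the present paper; the only information given here is the algorithm description in \cref{sec:experiments}: first obtain a \emph{fully fair} independent set $P$ (ignoring $f$), split $P$ itself into two halves $P_1,P_2$ (each automatically meeting every $\lfloor \ell_c/2\rfloor$ bound), extend \emph{each} half via the matroid-intersection black box, and return the better of the two results.

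Your scheme diverges exactly at the step you yourself flag as an obstacle. You build a single skeleton $F$ meeting only the relaxed bounds and extend it once; but nothing forces the contracted instance $(\cI/F)\cap(\cU/F)$ to contain a set of value $\ge \OPT/2$. Your halving of an optimal solution $S^*$ into $A\sqcup B$ analyzes a \emph{different} problem (the relaxed-lower-bound instance with no forced seed) and says nothing about extensions of your particular $F$: neither $F\cup A$ nor $F\cup B$ need be independent in $\cI$, and the exchange patch you sketch would swap elements out of $F$, destroying the very lower-bound guarantee you built $F$ for. Running the extension from \emph{both} disjoint halves of a fully fair set is precisely what closes this gap: because $P_1,P_2$ are disjoint and together independent, a matroid-exchange argument distributes $S^*$ between the two contracted instances so that the sum of their optima is at least $f(S^*)=\OPT$; the better half then has contracted optimum $\ge \OPT/2$, and the black box delivers $(\alpha/2)\cdot\OPT$. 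A single arbitrary skeleton cannot support this averaging step.
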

We can have $\alpha$ be almost $1/2$~\cite{Lee2010} and thus get an almost $1/4$-approximation.
(\cite{el2023fairness} work in the streaming setting and instead use the streaming algorithm for matroid intersection of~\cite{GargJS21}; this results in a $1/11.66$-approximation in two passes.)
Here, violating lower bound constraints by a factor 2 means that,
if a color has a lower bound of $\ell$,
the solution is guaranteed to have at least $\floor{\ell/2}$ elements of that color.
Note that in MSPM we have $\ell=1$ and thus $\floor{\ell/2}=0$.
\begin{theorem}[\cite{el2024}] \label{thm:inexp}
    There is a polynomial-time algorithm for FMMSM that satisfies \emph{lower and upper} bound constraints in expectation rather than exactly,
    and obtains a $(1-1/e)$-approximation in expectation.
\end{theorem}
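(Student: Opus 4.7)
\begin{proofsketch}
The plan is to prove \Cref{thm:inexp} by the classic continuous-relaxation plus rounding strategy of \cite{CalinescuCPV11}. Let $f$ be the monotone submodular objective, let $M$ be the matroid constraint with polytope $P_M$, and suppose the fairness constraints prescribe, for each color class $C_c$, a lower bound $\ell_c$ and an upper bound $u_c$ on $|S \cap C_c|$. Let $F$ denote the multilinear extension of $f$, and consider the feasible polytope
\[
    P \;=\; P_M \cap \left\{ x \in [0,1]^E : \ell_c \leq \sum_{i \in C_c} x_i \leq u_c \text{ for every color } c \right\} .
\]
Because $P_M$ admits an efficient separation oracle and the fairness constraints are linear, any linear function can be optimized over $P$ in polynomial time. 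Letting $\OPT$ denote an optimal integer solution, its indicator $\mathbf{1}_{\OPT}$ lies in $P$, so $\max_{x \in P} F(x) \geq f(\OPT)$.

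The first step is to run the continuous greedy algorithm on $P$: starting from $x(0) = 0$ and following $\tfrac{d x}{d t} = v(t)$ with $v(t) \in \arg\max_{v \in P} \langle \nabla F(x(t)), v \rangle$, one obtains the endpoint $x(1) = \int_0^1 v(t)\, dt$, which lies in $P$ because $P$ is convex and $v(t) \in P$ for every $t$. Although $P$ is \emph{not} downward-closed due to the fairness lower bounds, the standard analysis still applies: monotonicity and submodularity of $F$ give $F(x^*) - F(x(t)) \leq \langle \nabla F(x(t)), x^* \rangle$ for every $x^* \in P$, which combined with the greedy choice of $v(t)$ yields the differential inequality $\tfrac{d}{dt} F(x(t)) \geq F(x^*) - F(x(t))$. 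Integrating from $0$ to $1$ with $x^* = \mathbf{1}_{\OPT}$ gives $F(x(1)) \geq (1 - 1/e) f(\OPT)$; as usual, one discretizes the ODE and estimates $F$ and $\nabla F$ by sampling, absorbing the lower-order errors into the approximation factor.

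The second step is to round $x(1)$ to an integer set using swap rounding within the matroid $M$. Swap rounding produces a random base $B$ of $M$ with (a) marginal-preservation $\bP[i \in B] = x_i(1)$, and (b) $\bE[g(B)] \geq G(x(1))$ for any monotone submodular $g$ with multilinear extension $G$. Applying (b) to $f$ together with the first step gives the approximation guarantee $\bE[f(B)] \geq (1 - 1/e) f(\OPT)$; meanwhile (a) implies fairness in expectation, $\bE[|B \cap C_c|] = \sum_{i \in C_c} x_i(1) \in [\ell_c, u_c]$. The matroid constraint itself is satisfied exactly because $B$ is a base of $M$.

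The main technical subtlety is ensuring that continuous greedy retains its $(1 - 1/e)$ guarantee on $P$, which is not downward-closed -- the setting in which the algorithm is usually stated. This reduces to two observations: (i) even though $0 \notin P$ in general, the endpoint $x(1)$ still lies in $P$ as an integral average of points in $P$, so feasibility at $t=1$ is preserved; and (ii) the key inequality $F(x^*) - F(x) \leq \langle \nabla F(x), x^* \rangle$ uses only monotonicity and submodularity of $f$, not any structural property of the polytope. Overall, it is the \emph{rounding} rather than the continuous relaxation that is the real obstacle to exact fairness, and relaxing fairness to hold in expectation is precisely what enables the clean $(1-1/e)$ bound via marginal-preserving swap rounding.
\end{proofsketch}
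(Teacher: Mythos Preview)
This theorem is quoted from \cite{el2024} and the present paper does not supply its own proof; it only remarks that ``the algorithm proceeds by solving and rounding the multilinear relaxation'' and, in a footnote, points to the randomized swap-rounding bounds of \cite{Chekuri2010}. Your sketch is correct and fleshes out precisely this approach---continuous greedy over $P_M$ intersected with the linear fairness box, followed by marginal-preserving randomized swap rounding in the matroid---and your observation that the $(1-1/e)$ analysis survives on a non-downward-closed polytope (because the key inequality $F(x^*)-F(x)\le\langle\nabla F(x),x^*\rangle$ uses only monotonicity and submodularity) is exactly the point that needs checking.
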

\cref{thm:inexp} also guarantees certain two-sided tail bounds on the violation of each fairness constraint which apply if $\ell$ is large enough.
It is the only algorithm considered in this paper that violates the \emph{upper} bounds.
The algorithm proceeds by solving and rounding the multilinear relaxation.

If we consider a relaxed version of MSPM
where instead of a \emph{perfect} matching
we want a \emph{large} matching that also has high submodular function value,
then a simple greedy algorithm
will yield a $1/3$-approximation (\cref{thm:mi-greedy})
and construct a maximal matching,
thus getting $1/2$ of the maximum possible size.
The results in \cref{thm:streaming,thm:inexp} give no improvement upon
\iffull
this.\footnote{The result of \cref{thm:inexp} (\cite{el2024}) only satisfies the upper bounds in expectation, so to obtain a feasible solution for MSPM (i.e., a matching), one needs to delete some violating edges from the solution, which will damage the objective value and the solution cardinality.
Therefore, as stated, the result gives no theoretical guarantees for MSPM.
By opening up the algorithm and its proof, one can use the bounds for randomized swap rounding~\cite[Theorem II.3]{Chekuri2010} to get some meaningful bound: $\bP[|S \cap \delta(v)| \ge 1] \ge 1 - e^{-1/2} \approx 0.4$. This means that, for a bipartite graph with $n+n=2n$ vertices, $0.4n$ right-side vertices will be guaranteed to have an incident edge in $S$; when we select one edge per right-side vertex, we can expect a matching of size $0.4n$, and one can perhaps hope for a similar bound on submodular value, i.e., $0.4(1-1/e)\OPT \approx 0.25 \OPT$. But a simple greedy algorithm gets $0.5n$ and $\textsc{OPT}/3$, respectively.}
\else
this.
\fi
While one can try to generalize this simple approach to FMMSM,
it faces another issue that is salient in the context of fairness motivations:
while at least half of the total lower bound mass
will be satisfied,
there could be ``unlucky'' colors
(marginalized groups)
that never get represented in the solution;
this is precisely the reason why we seek fair algorithms in the first place.

\subsection{Our contributions}

In this work we
provide an algorithm that
satisfies the fairness constraints
within a factor better than 2,
while also giving guarantees for every individual group (rather than only in aggregate like the simple greedy strategy discussed above).
To achieve the
\iffull former objective, \else former, \fi
we trade off part of the objective value;
to achieve the latter, we employ randomization.

\begin{theorem}[informal version of \cref{thm:main-randomized}]
    For every $\eps \in (0,1)$ there is a polynomial-time algorithm for FMMSM whose output
    \iffull
    \begin{itemize}
    \else
    \begin{itemize}[topsep=0pt,itemsep=0pt]
    \fi
        \item satisfies the matroid constraint,
        \item satisfies fairness upper bound constraints,
        \item 
        for a group with fairness lower bound $\ell$,
        has in expectation at least $(1-\eps)\ell$ elements from that group,
        \item has expected size at least $(1-\eps)$ times the maximum size of any feasible solution,
        \item satisfies Chernoff-style high-probability bounds on size, as well as total fairness violation,
        \item has expected submodular function value at least $0.499 \cdot \eps \cdot \OPT$.
    \end{itemize}
\end{theorem}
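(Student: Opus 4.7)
The plan is to build a randomized algorithm that mixes a maximum-cardinality fair feasible solution with a near-optimal maximizer for the (easier) two-matroid intersection version of the problem, and then round the resulting fractional convex combination inside the matroid polytope.

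First I would precompute two base solutions. Let $M^*$ be a set independent in the input matroid $\mathcal{M}_1$, of maximum cardinality subject to $\ell_c \le |M^* \cap V_c| \le u_c$ for every color $c$. Since this is the intersection of $\mathcal{M}_1$ with the partition matroid $\mathcal{M}_2$ encoding the upper bounds, augmented with lower-bound covering, $M^*$ can be computed in polynomial time via a matroid-intersection/flow routine, and its size equals the maximum feasible size $|\OPT|$. Separately, let $G$ be a $(\tfrac{1}{2}-o(1))$-approximate monotone-submodular maximizer over $\mathcal{M}_1 \cap \mathcal{M}_2$ via~\cite{Lee2010}. Since dropping the lower bounds only enlarges the feasible region, $f(G) \ge (\tfrac{1}{2}-o(1))\OPT$. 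Now form the fractional point $x := (1-\eps)\mathbf{1}_{M^*} + \eps\mathbf{1}_G \in P_{\mathcal{M}_1}$, a convex combination of two independent sets; for each color $c$ it satisfies $(1-\eps)\ell_c \le x(V_c) \le u_c$, and by monotonicity of the multilinear extension $F$ and its concavity along nonnegative rays, $F(x) \ge F(\eps\mathbf{1}_G) \ge \eps f(G) \ge 0.499\,\eps\,\OPT$.

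Next I would apply randomized swap rounding to $x$ inside $\mathcal{M}_1$ to obtain a random $S_0 \in \mathcal{M}_1$ with $\bE[\mathbf{1}_{S_0}] = x$, $\bE[f(S_0)] \ge F(x)$, and Chernoff-style concentration for any linear function of $\mathbf{1}_{S_0}$ (in particular $|S_0|$ and the total lower-bound violation $\sum_c \max(0, \ell_c - |S_0 \cap V_c|)$). To enforce $\mathcal{M}_2$ exactly, I would then delete arbitrary excess elements in each overfull color block, producing $S \subseteq S_0$ with $S \in \mathcal{M}_1 \cap \mathcal{M}_2$. Properties 1 and 2 of the theorem (matroid and UB) then hold by construction; properties 3, 4, and 6 (expected LB, expected size, and expected $f$-value) follow from $\bE[\mathbf{1}_{S_0}] = x$ together with a bound on the expected per-block deletion; and property 5 (Chernoff-style bounds on $|S|$ and total LB violation) is inherited from the concentration of swap rounding.

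The main obstacle is the UB clean-up: swap rounding in $\mathcal{M}_1$ need not respect $\mathcal{M}_2$, and when a color block has no UB-slack---for instance $u_c = 1$, as in the MSPM special case---$|S_0 \cap V_c|$ can exceed $u_c$ with constant probability, so naive deletion could cost a constant fraction of size and $f$-value. I would address this by first pre-shrinking $x$ to $(1-\eps_0)x$ for some $\eps_0 = O(\eps)$ to introduce multiplicative UB-slack (so Chernoff tails for large $u_c$ become negligible), and, in the tight small-$u_c$ regime, by replacing naive deletion with a per-block contention-resolution-style clean-up (keep a uniformly random $\min(u_c, |S_0 \cap V_c|)$-element subset inside each color) and charging the expected loss against the slack $\eps_0$. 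Under this accounting, each expectation-level guarantee loses only $o(\eps)$ relative to the $(1-\eps)$ factors in the statement, which, combined with the Chernoff bounds inherited from swap rounding, yields all six claimed guarantees.
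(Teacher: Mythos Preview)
Your proposal shares the paper's starting point---compute a maximum-cardinality fair independent set $P$ (your $M^*$) and a high-value upper-fair set $Y_0$ (your $G$) via \cite{Lee2010}, then interpolate with parameter~$\eps$---but diverges sharply in how the interpolation is realized. The paper never forms a fractional point or invokes swap rounding. Instead it works entirely combinatorially inside $\cI\cap\cU$: its key technical lemma (\cref{lem:exist-k-paths}) shows that the matroid-intersection exchange graph between the current solution and $P$ contains $k$ \emph{disjoint} augmenting/alternating paths (one per unit of color deficit), each of which, when applied, keeps the solution in $\cI\cap\cU$ and reduces the deficit by exactly one. The algorithm applies $\approx(1-\eps)k$ of these paths, each chosen uniformly at random and with the exchange graph rebuilt between steps; disjointness plus \cref{fact:submod_k} gives a telescoping bound $\bE[f(Y_i)]\ge\frac{k-i}{k}f(Y_0)$, and a hypergeometric-sampling coupling gives the per-color expectations and Chernoff tails. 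No upper-bound repair is ever needed.

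Your swap-rounding-then-cleanup route has a genuine gap precisely at the step you flag as the main obstacle. In the tight regime $u_c=1$ (the MSPM special case the paper singles out), pre-shrinking to $(1-\eps_0)x$ buys nothing: the overflow event $|S_0\cap V_c|\ge 2$ still occurs with probability $\Theta(\eps(1-\eps))$ regardless of $\eps_0$, so expected deletions are $\Theta(\eps N)$, not $o(\eps N)$. Your uniformly-random per-block deletion then drops each $G$-element's marginal from $\eps$ to roughly $\eps(1+\eps)/2$ already in a single $K_{2,2}$ block---a constant-factor loss, not $o(\eps)$---so the assertion that ``each expectation-level guarantee loses only $o(\eps)$'' is false as stated. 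Even if the $f$-value bound could be rescued with a worse constant, you give no argument for submodular (as opposed to linear) $f$, and the per-color bound $\bE[|S\cap V_c|]\ge(1-\eps)\ell_c$ would require controlling $\bE[\max(0,|S_0\cap V_c|-u_c)]$ against the slack $x(V_c)-(1-\eps)\ell_c$, which is not established. The paper's discussion around \cref{thm:inexp} (and the accompanying footnote on MSPM) explains exactly why single-matroid rounding followed by UB repair is problematic here; your proposal runs into the same wall, and ``charging the expected loss against the slack $\eps_0$'' is not an argument that closes it.
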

Our bound on the submodular function value is actually shown with respect to a more powerful optimum, namely, an optimal set that satisfies the matroid and upper-bound constraints, but not necessarily the lower-bound constraints.
If one wants to compare to this optimum,
then the $O(\eps)$ factor loss in value is unavoidable.
To see this, consider MSPM in a graph $P_3 \times N$ consisting of a disjoint union of $N$ paths of length 3, with a linear objective function assigning values $0,1,0$ to each path's edges. A perfect matching of size $2N$ has $0$ value, and a maximal matching of size $N$ has value $N$; one can interpolate between these smoothly.

We note that by instantiating $\eps=1/2$ we obtain an almost $1/4$-approximation
while violating lower bounds by a factor $2$,
which is similar to the bounds of \cref{thm:streaming} (\cite{el2023fairness}).

As a second contribution,
we also employ our techniques to obtain a deterministic algorithm.
There are several variants that we could formulate;
we choose to show a general setting of matroid intersection,
where the trade-off is between size and objective value.
The relation to fairness is that an algorithm that finds a solution of maximum size that is an $\alpha$-approximation to the objective value would imply an $\alpha$-approximation algorithm for FMMSM
(see \cite{el2023fairness}, Proposition C.6).
\iffull
\begin{theorem}[informal version of \cref{thm:main-deterministic}]
\else
\begin{theorem}
\fi
\label{thm:deterministic-informal}
    For every $\eps \in (0,1)$
    there is a deterministic polynomial-time algorithm for the problem of maximizing a monotone submodular function subject to two matroid constraints whose output has size at least $(1-\eps)$ times the maximum size of any feasible solution minus one,
    and obtains a $(0.499 \cdot \eps)$-approximation to the submodular function value.
\end{theorem}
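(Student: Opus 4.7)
The plan is a two-phase algorithm that decouples the objective value from the cardinality. First, I construct a small common independent set $A$ with large $f$-value; then I extend it to a large common independent set $B \supseteq A$ using only matroid operations, so that monotonicity of $f$ guarantees $f(B) \geq f(A)$.

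Concretely, compute $k^*$, the maximum size of a common independent set in $(\mathcal{M}_1, \mathcal{M}_2)$, via Edmonds' matroid intersection algorithm. Set $t := \ceil{\eps k^*}$ and let $\mathcal{M}_1^t$ be the truncation of $\mathcal{M}_1$ to rank $t$ (itself a matroid on the same ground set). Apply the deterministic $(1/(2+\gamma))$-approximation algorithm of~\cite{Lee2010} for monotone submodular maximization over the intersection of two matroids (with $\gamma$ chosen small enough so $1/(2+\gamma) \geq 0.499$) to the instance $(\mathcal{M}_1^t, \mathcal{M}_2, f)$; call its output $A$, so $|A| \leq t$. Then contract $A$ from both matroids and compute, again via matroid intersection, a maximum common independent set $C$ of $\mathcal{M}_1/A$ and $\mathcal{M}_2/A$; output $B := A \cup C$, which is a common independent set in the original matroids.

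For the value bound, I invoke the standard submodular-sampling fact that for monotone submodular $f$, a uniformly random $t$-subset $R$ of any set $S$ satisfies $\bE[f(R)] \geq (t/|S|)\,f(S)$. Applied to the $f$-optimal common independent set $S^\star$ (of size at most $k^*$), this yields a $t$-subset $R^\star \subseteq S^\star$---still common independent---with $f(R^\star) \geq (t/k^*)\,\OPT \geq \eps \cdot \OPT$. Hence the truncated optimum is at least $\eps \cdot \OPT$, and Lee's guarantee plus monotonicity give $f(B) \geq f(A) \geq 0.499 \cdot \eps \cdot \OPT$.

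For the size bound, I apply the matroid intersection min-max theorem to the contracted matroids. Using $r_{\mathcal{M}/A}(U) = r_{\mathcal{M}}(U \cup A) - |A|$ and the substitution $U' := U \cup A$, the size $|C|$ equals $\min_{U' \supseteq A}[r_1(U') + r_2(E \setminus U')] - 2|A|$, which is at least $k^* - 2|A|$ because the unrestricted min-max gives $k^* = \min_{U' \subseteq E}[r_1(U') + r_2(E \setminus U')]$. Thus $|B| = |A| + |C| \geq k^* - |A| \geq k^* - t \geq (1-\eps)k^* - 1$. The main technical obstacle is exactly this size-extension step: naive matroid-intersection augmenting paths starting from $A$ could evict elements of $A$ and damage $f(A)$; routing the extension \emph{through the contractions} $\mathcal{M}_i/A$ instead forces $A \subseteq B$, and the min-max bound then provides the near-maximum cardinality essentially for free.
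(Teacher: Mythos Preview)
Your proof is correct, but it takes a genuinely different route from the paper's.

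The paper first runs the Lee--Sviridenko--Vondr\'ak algorithm on the \emph{full} matroids to obtain a high-value set $Y_0$ with $f(Y_0)\ge 0.499\cdot\OPTMatInt$, and then iteratively grows $Y_0$ toward a maximum-cardinality common independent set $P$: at each step it invokes a structural lemma (\cref{lemma:exist-k-paths-det}) that produces $|P|-|Y_{i-1}|$ disjoint augmenting sets, picks the one whose application maximizes $f$, and uses \cref{fact:submod_k} to show that each step retains at least a $\frac{|P|-|Y_{i-1}|-1}{|P|-|Y_{i-1}|}$ fraction of the current value. After $\lfloor(1-\eps)(|P|-|Y_0|)\rfloor$ steps the telescoping product yields the $\eps$ factor.

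You instead push the $\eps$ loss into the \emph{first} phase by truncating $\mathcal{M}_1$ to rank $t=\lceil\eps k^*\rceil$, so the value set $A$ is already small; you then extend for free (value-wise, by monotonicity) using contraction and the matroid-intersection min--max theorem. Your approach is more modular: it needs only black-box matroid-intersection primitives and the elementary random-subset bound $\bE[f(R)]\ge (t/|S|)f(S)$, and avoids the paper's custom augmenting-path lemma entirely. The paper's approach, on the other hand, reuses the same path machinery that powers its main randomized result, so in context it comes essentially ``for free''; it also starts from a set approximating the full $\OPTMatInt$ and degrades, which may behave better in practice than approximating a truncated optimum.

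One minor imprecision: your displayed identity ``$|C|$ equals $\min_{U'\supseteq A}[r_1(U')+r_2(E\setminus U')]-2|A|$'' is not literally an equality (the second rank term should be $r_2((E\setminus U')\cup A)$), but monotonicity of $r_2$ gives the needed inequality in the direction you use, so the bound $|C|\ge k^*-2|A|$ and hence $|B|\ge(1-\eps)k^*-1$ are unaffected.
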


\iffull
\paragraph{Experimental results.}
\else
\noindent\textbf{Experimental results.}
\fi
We show the effectiveness of our algorithm
empirically
against prior work and natural baselines
on a suite of standard benchmarks.
We measure the submodular objective value and total fairness violation.
Our algorithms produce solutions whose value is  competitive with the highest-value baseline, which completely ignores the lower bound constraints
and accordingly has the highest fairness violations.
In two out of three scenarios,
our algorithms dominate prior work~\cite{el2023fairness}.
Finally, a key strength of our approach is the flexibility given by $\eps$, allowing users to tune the balance between utility and fairness.

\iffull
\paragraph{Our techniques.}
\else
\noindent\textbf{Our techniques.}
\fi
Let us begin with the simple setting of perfect matchings (MSPM).
Consider the symmetric difference of a high-value matching $Y$ and a perfect matching
\iffull $P$
(where $Y$ might be small and $P$ might have no value). \else $P$. \fi
This decomposes into a collection of vertex-disjoint alternating cycles and augmenting paths.

One possible algorithm is to ignore the cycles, and choose some of the augmenting paths to apply to~$Y$,
so that its size grows to at least $(1-\eps)|P|$.
We can do this by computing the marginal contribution of the elements that $Y$ would lose in each path, and taking the least damaging paths;
by submodularity, this loses at most a $(1-\eps)$ fraction of value in $Y$.

While this does ensure a large matching,
some $\eps$ fraction of vertices can still be ``unlucky'' and end up unmatched.
Deterministically this would be hard to avoid
(short of solving MSPM/FMMSM completely, with no fairness violation);
our next idea is to choose the paths randomly in the above solution.
This will work for MSPM, as long as we take care to select a $(1-\eps)$ fraction of the $|P|-|Y|$ many augmenting paths,
even if we already have $|Y| \ge (1-\eps)|P|$.
Then every vertex that was not matched in $Y$
has a $(1-\eps)$ probability of being matched in the new solution.

However, there are two main challenges
when trying to generalize the above approach
to matroid and fairness constraints.
Firstly,
having fairness bounds with $\ell_c < u_c$ means that
$Y$ can have fewer elements than $P$ in some colors
but more elements in other colors,
and can even have $|Y|=|P|$ while still violating many fairness lower bounds.
This means that we need to find and apply not only
augmenting paths, but also alternating paths that exchange an element of an oversaturated color for one of an undersaturated color,
without increasing the solution size.
We show that as long as the total fairness violation is large, there are many such disjoint paths,
which implies that applying a random fraction of them still retains enough value.

The second, larger obstacle
arises due to dealing with general matroids.
We are able to use tools from matroid theory
to show the existence of many disjoint alternating or augmenting paths
in an appropriate matroid intersection exchange graph
whose vertices correspond to elements of $Y$ and $P$
(which were edges in the case of MSPM).
We need to carefully refine the paths
via an asymmetric shortcutting process
to ensure that applying them leaves the solution independent in the matroid
while also not disrupting the counts of elements in the colors not being exchanged.
Moreover, in general, multiple augmenting paths in matroids cannot be applied simultaneously.
We deal with this using an iterative framework where we apply a single path, rebuild the exchange graph, and find a new large collection of disjoint paths;
we then bound the loss in value after each step.

\iffull
\paragraph{Paper organization.}
We discuss more related work in \cref{sec:addtl-related-work}.
\else
\noindent\textbf{Paper organization.}
\fi
In \cref{sec:prelim} we introduce all necessary notation, definitions, and useful facts.
In \cref{sec:our-algo} we describe our algorithms and prove their properties.
\cref{sec:experiments} is devoted to the experimental evaluation.
We conclude and discuss the limitations and broader impact of our work in \cref{sec:conclusion}.
\iffull
\else
Additional related work is discussed in Section~1.2
in the full version (in the supplementary materials), which also contains all omitted content and skipped proofs.
\fi

\iffull
\subsection{Additional related work}
\label{sec:addtl-related-work}
The {\em non-monotone} Fair Matroid Submodular Maximization problem was explored by
\cite{yuan2023group}
under the cardinality constraint. They achieved a $0.2005$-approximation for the special case where for all colors $c$, we have $\ell_c/|V_c|=a$ and $u_c/|V_c|=b$ for some constants $a,b\in [0,1]$. Later, \cite{el2024} recovered and further generalized their results. In particular for general matroids, they achieved a $(1-\beta)/(8+\eps)$ approximation algorithm that guarantees the number of elements from each group $c$ is between $\floor{\beta\ell_c}$ and $u_c$ for a trade-off parameter $\beta\in [0,1/2]$.

In this work, we consider the setting where the color groups are disjoint. The more general case, where groups may overlap, was previously studied by
\cite{CelisHV18}
for the special case of FMMSM with a cardinality constraint. They show that when elements can belong to three or more groups, simply checking the feasibility becomes NP-hard. However, by allowing for violations of the fairness constraints and in particular guaranteeing the fairness constraint in expectation, they gave a $(1-1/e-o(1))$-approximation algorithm for the problem.

An alternative notion of fairness in submodular maximization has been explored in
\cite{tsang2019,tang2023beyond,Wang2024},
where the focus is on ensuring that each group -- potentially not limited to subsets of the ground set $V$ -- receives at least a specified amount of value from the selected solution. In these formulations, the value is  modeled using a monotone submodular function. This line of work can be cast as a multi-objective submodular maximization problem
\cite{Krause2008,Chekuri2010,Udwani2018}.
\fi

\section{Preliminaries}
\label{sec:prelim}

We denote the symmetric difference $(X \setminus Y) \cup (Y \setminus X)$ of two sets $X$ and $Y$ by $X \triangle Y$.

\iffull
\paragraph{Submodular functions.}
\else
\noindent\textbf{Submodular functions.}
\fi
We consider functions $f: 2^V \to \bR_+$ defined on a ground set $V$.
We say that $f$ is \emph{submodular} if 
   $f(Y \cup \{e\}) - f(Y) \ge f(X \cup \{e\}) - f(X)$
   for any two sets $Y \subseteq X \subseteq V$ and any element $e\in V \setminus X$. Moreover, $f$ is  \emph{monotone} if $f(Y) \leq f(X)$ for any two sets $Y \subseteq X \subseteq V$. We assume that $f$ is given as an oracle that computes $f(S)$ for given $S \subseteq V$;
    we consider the running time of this oracle to be $O(1)$.

The following fact is folklore.
\iffull
    We provide a proof for completeness.
\else
    We provide a proof \infullversion.
    \vspace{-0.5cm}
\fi
\begin{fact} \label{fact:submod_k}
    Let $f$ be a non-negative submodular function and $X_1, X_2, ..., X_k \subseteq X$ be disjoint subsets of $X$. Then
    \[ \sum_{i=1}^k f(X \setminus X_i) \ge (k-1) f(X) . \]
\end{fact}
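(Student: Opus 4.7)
The plan is to rewrite the claim in its equivalent ``loss'' form
\[
    \sum_{i=1}^k \bigl[ f(X) - f(X \setminus X_i) \bigr] \;\le\; f(X),
\]
and to bound the left-hand side by a telescoping sum obtained from a carefully chosen nested chain of sets. The key feature I want to exploit is the standard consequence of submodularity that removal marginals \emph{shrink} as the ground set grows: whenever $S \subseteq A \subseteq B$, one has $f(A) - f(A \setminus S) \ge f(B) - f(B \setminus S)$.

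Concretely, I would fix an arbitrary ordering of the disjoint sets and define $T_0 := X$ and $T_i := X \setminus (X_1 \cup \cdots \cup X_i)$ for $i = 1, \ldots, k$. Since the $X_i$ are pairwise disjoint, $X_i \subseteq T_{i-1}$ for every $i$, and of course $T_{i-1} \subseteq X$. Applying the shrinking-marginals inequality with $S = X_i$, $A = T_{i-1}$, and $B = X$ then yields
\[
    f(X) - f(X \setminus X_i) \;\le\; f(T_{i-1}) - f(T_{i-1} \setminus X_i) \;=\; f(T_{i-1}) - f(T_i).
\]
Summing over $i$ gives a telescoping bound
\[
    \sum_{i=1}^k \bigl[ f(X) - f(X \setminus X_i) \bigr] \;\le\; f(T_0) - f(T_k) \;=\; f(X) - f(X \setminus \textstyle\bigcup_i X_i) \;\le\; f(X),
\]
where the last inequality uses non-negativity of $f$. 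Rearranging yields $\sum_i f(X \setminus X_i) \ge (k-1) f(X)$, as required.

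There is no real obstacle here: the only subtle point is making sure that the chain $T_i$ is chosen so that (a) the sets $X_i$ being removed at each step fit inside the current $T_{i-1}$, and (b) the telescoping ends at a set whose value we can discard via non-negativity. Disjointness of the $X_i$ delivers (a) for free, and the choice $T_k = X \setminus \bigcup_i X_i$ makes (b) immediate. Monotonicity of $f$ is not needed; only submodularity and $f \ge 0$.
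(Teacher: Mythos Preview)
Your proof is correct and, once unrolled, coincides with the paper's argument: the paper proceeds by induction on $k$, at each step applying submodularity to the pair $X \setminus (X_1 \cup \cdots \cup X_{i-1})$ and $X \setminus X_i$, which is precisely your inequality $f(X) - f(X \setminus X_i) \le f(T_{i-1}) - f(T_i)$ in disguise. The only difference is packaging---you telescope directly along the chain $T_0 \supseteq \cdots \supseteq T_k$, while the paper wraps the same chain in an inductive statement---so the two proofs are essentially the same.
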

\iffull
\begin{proof}
    We use induction on $k$. The base case $k=1$, i.e., that $f(X \setminus X_1) \ge 0$, follows because $f \ge 0$.
    For $k > 1$, we apply the inductive hypothesis to the set family $X_1 \cup X_2, X_3, X_4, ..., X_k$. We get
    \[ f(X \setminus (X_1 \cup X_2)) + \sum_{i=3}^k f(X \setminus X_i) \ge (k-2) f(X) . \]
    By submodularity,
    \[ f(X \setminus X_1) + f(X \setminus X_2) - f(X \setminus (X_1 \cup X_2)) \ge f(X) .\]
    Adding these two inequalities gives the statement.
\end{proof}
\fi

\iffull
\paragraph{Matroids.}
\else
\noindent\textbf{Matroids.}
\fi
    A \emph{matroid} is a set family $\cI \subseteq 2^V$ 
    with the properties:
    \iffull
    \begin{itemize}
    \else
    \begin{itemize}[topsep=0pt,itemsep=0pt]
    \fi
        \item 
            \textit{Downward-closedness}: 
            if $X \subseteq Y$ and $Y \in \cI$, then $X \in \cI$; 
        \item    
            \textit{Augmentation}: 
            if $X, Y \in \cI$ and $|X| < |Y|$, then there exists $e \in Y$ with $X + e \in \cI$.
    \end{itemize}
    We abbreviate $X \cup \{e\}$ as $X+e$ and $X \setminus \{e\}$ as $X-e$.
    We assume that the matroid is given as an oracle
    that, for a given $S \subseteq V$,
    answers whether $S \in \cI$;
    we consider the running time of this oracle to be $O(1)$.
    We say that a set $S \subseteq V$ is \emph{independent} if $S \in \cI$.

\iffull
\paragraph{Matroid exchange graph.}
\else
\noindent\textbf{Matroid exchange graph.}
\fi
Let $\cI$ be a matroid on universe $V$
and $Y$, $Z$ be two independent sets.

\begin{definition}
     We define the exchange graph for $Y$ and $Z$ with respect to $\cI$ as the bipartite graph
    \[ (Y \setminus Z, Z \setminus Y, \{(y,z) : Y-y+z \in \cI\}) . \]
 \end{definition}

\begin{lemma}[\cite{Schrijver}, Corollary 39.12a] \label{lem:perfect-matching}
    If $|Y|=|Z|$,
    then
    the exchange graph for $Y$ and $Z$ with respect to $\cI$
    contains a perfect matching.
\end{lemma}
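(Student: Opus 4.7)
The plan is to prove existence of a perfect matching by verifying Hall's marriage condition on the bipartite exchange graph. First note that $|Y \setminus Z| = |Z \setminus Y|$ since $|Y|=|Z|$ and they share $Y \cap Z$, so a perfect matching means saturating $Y \setminus Z$. I will show that for every $A \subseteq Y \setminus Z$, its neighborhood $N(A) \subseteq Z \setminus Y$ satisfies $|N(A)| \geq |A|$.

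The key observation is that since $A \subseteq Y \setminus Z$ is disjoint from $Z$, we have $Y \cap Z \subseteq Y \setminus A$, and therefore $Z \subseteq (Y \setminus A) \cup (Z \setminus Y)$. Consequently, the rank of $(Y \setminus A) \cup (Z \setminus Y)$ is at least $|Z| = |Y|$. The set $Y \setminus A$ is independent and has size $|Y| - |A|$, so by repeated application of the augmentation axiom I can extend it within $(Y \setminus A) \cup (Z \setminus Y)$ to an independent set $T$ of size $|Y|$. Setting $S := T \setminus (Y \setminus A)$, we get $S \subseteq Z \setminus Y$ with $|S| = |A|$ and $(Y \setminus A) \cup S \in \cI$.

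The heart of the proof is then to show $S \subseteq N(A)$, which gives $|N(A)| \geq |A|$. Fix $z \in S$. If $Y + z \in \cI$, then by downward closure $Y - y + z \in \cI$ for any $y \in A$ (we may assume $A \ne \emptyset$, else the Hall condition is trivial), so $z \in N(A)$. Otherwise $Y + z$ contains a unique circuit $C$, which must contain $z$; for any $y \in C \setminus \{z\}$ the set $Y - y + z$ is independent. The main step is to argue that $C$ must meet $A$: if $C \cap A = \emptyset$, then $C \subseteq (Y \setminus A) \cup \{z\} \subseteq (Y \setminus A) \cup S = T$, contradicting independence of $T$. Hence some $y \in A$ satisfies $Y - y + z \in \cI$, so $z \in N(A)$.

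The potentially subtle step is the fundamental-circuit argument in the last paragraph, since one must keep track of which elements lie in $Y$, $Z$, $A$, and $S$ and exploit that $S \subseteq Z \setminus Y$ together with $A \subseteq Y \setminus Z$ are disjoint from both $Y \cap Z$ and each other. Everything else is a routine application of the two defining matroid axioms together with Hall's theorem, so I expect no further obstacle.
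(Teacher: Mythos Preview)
Your proof is correct. Note, however, that the paper does not actually supply its own proof of this lemma; it is quoted as a black box from Schrijver's book (Corollary 39.12a). Your Hall-condition argument with the fundamental-circuit step is the standard way this result is established, and it is essentially the approach taken in Schrijver as well, so there is no meaningful divergence to discuss.

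One small remark on presentation: your last paragraph relies on the notion of a (unique, fundamental) circuit, which the paper's preliminaries do not introduce. If you want the argument to sit self-contained within the paper's framework of just downward-closure and augmentation, you can replace that step as follows. Since $z \in S$, the set $(Y \setminus A) + z \subseteq T$ is independent. If $|A|=1$, say $A=\{y\}$, this set is exactly $Y - y + z$, so $z \in N(A)$. If $|A| \ge 2$, repeatedly apply the augmentation axiom to extend $(Y \setminus A) + z$ inside $Y$ to an independent set of size $|Y|$; the added elements must all come from $A$, and the result is $Y - y + z$ for the one element $y \in A$ that was not added, giving $z \in N(A)$. This avoids circuits entirely.
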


\iffull
\begin{lemma}[\cite{Schrijver}, Corollary 39.13] \label{lem:unique-perfect-matching}
    Let $Y$ be an independent set and let $Z \subseteq V$ be such that $|Z| = |Y|$. If the exchange graph for $Y$ and $Z$ with respect to $\cI$
    contains a \textit{unique} perfect matching between $Y \setminus Z$ and  $Z \setminus Y$, then $Z$ is also an independent set. 
\end{lemma}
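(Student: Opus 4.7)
The plan is to proceed by induction on $k := |Y \setminus Z|$. The base case $k=0$ forces $Y = Z$, so $Z \in \cI$ is immediate. For the inductive step, I would first extract structural information from the uniqueness of the perfect matching $M = \{(y_i, z_i)\}_{i=1}^k$: build the auxiliary directed graph $D$ on index set $\{1,\ldots,k\}$ whose arcs $i \to j$ record the non-matching exchange edges $(y_i, z_j)$. Uniqueness of $M$ is equivalent to $D$ being acyclic, since any directed cycle in $D$ would yield an alternating cycle in the bipartite exchange graph and hence a second perfect matching. Topologically ordering $D$, I may assume without loss of generality that index~$1$ is a source: the only edge incident to $y_1$ in the exchange graph is the matching edge $(y_1, z_1)$, equivalently, $Y - y_1 + z_j \notin \cI$ for every $j \ne 1$.

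The reduction then sets $Y' := Y - y_1 + z_1$, which lies in $\cI$ by the matching edge. The goal is to show that the exchange graph for $Y'$ and $Z$ again admits a \emph{unique} perfect matching, namely $M' := M \setminus \{(y_1, z_1)\}$; applying the inductive hypothesis to $(Y', Z)$, for which $|Y' \setminus Z| = k-1$, then yields $Z \in \cI$. To see that each pair $(y_i, z_i) \in M'$ forms a valid exchange edge for $Y'$, I invoke the symmetric exchange axiom on the independent sets $Y - y_i + z_i$ and $Y'$: this produces some $w \in \{z_1, y_i\}$ with $(Y - y_i + z_i) - y_1 + w \in \cI$. The source property of $y_1$ rules out $w = y_i$, since that case would force $Y - y_1 + z_i \in \cI$; hence $w = z_1$, giving $Y' - y_i + z_i \in \cI$ as required. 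For uniqueness of $M'$, any alternative matching $M''$ in the new exchange graph would, by the same symmetric-exchange argument applied edge by edge (again using the source property to eliminate the unwanted branch), lift to a perfect matching $M'' \cup \{(y_1, z_1)\}$ in the original exchange graph distinct from $M$, contradicting the assumed uniqueness.

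The main obstacle I anticipate is precisely this reduction step. A priori the exchange graphs of $Y$ and $Y'$ can differ substantially: replacing $y_1$ by $z_1$ inside $Y$ may create some new valid exchange edges while destroying others. The only handle on relating them is the symmetric exchange axiom, and verifying that the source property of $y_1$ is exactly the structural condition needed to kill the unwanted branch of symmetric exchange — in both the existence and the uniqueness directions above — is the delicate heart of the proof.
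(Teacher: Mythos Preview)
The paper does not supply its own proof of this lemma; it is simply quoted from Schrijver's book (Corollary~39.13). So there is no in-paper argument to compare against.

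Your inductive argument is correct and is essentially the standard proof of this fact. Two minor remarks. First, a terminological slip: the property you isolate for index~$1$ (namely $Y - y_1 + z_j \notin \cI$ for all $j \neq 1$) says that vertex~$1$ has no \emph{outgoing} arcs in your digraph $D$, so it is a \emph{sink} rather than a source; acyclicity of $D$ furnishes sinks just as well, and nothing downstream is affected. Second, what you call the ``symmetric exchange axiom'' is really just the ordinary augmentation axiom applied to the pair $A - y_1$ (of size $|Y|-1$) and $B$ (of size $|Y|$): since $y_1 \notin B$, the augmenting element is forced to lie in $B \setminus A$, and your case analysis then goes through exactly as written, both for showing that $M'$ is a valid matching in the exchange graph of $(Y',Z)$ and for lifting any alternative matching $M''$ back to the original exchange graph (where the sink property of $y_1$ again kills the unwanted branch). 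The ``delicate heart'' you flag is thus handled cleanly by augmentation plus the sink property, and the induction closes.
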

\fi





\subsection{Fair Matroid Monotone Submodular Maximization (FMMSM)} \label{sec:prelim-fmmsm}
The universe $V$ is partitioned into $C$ sets: $V = V_1 \cup V_2 \cup ... \cup V_C$, where $V_c$ denotes elements of color $c$.
Every element has exactly one color.
The set of colors is denoted by $[C] = \{1,2,...,C\}$.
For every color $c \in [C]$ we have \emph{fairness bounds}:
lower bound $\ell_c$ and upper bound $u_c$.

The set of upper bounds gives rise to a \emph{partition matroid} that we will denote by $\cU$.
That is, \[ \cU = \{ S \subseteq V \mid |S \cap V_c| \leq u_c \; \ \forall c \in [C]\}. \]
It is well-known that such a collection of sets forms a matroid.
We will call a set $S \in \cU$ \emph{upper-fair}.

If a set satisfies both the lower and the upper bounds, we say that it is \emph{fair}.
That is, we define the family of fair sets $\cC$ as follows:
\[ \cC = \{S \subseteq V \mid  \ell_c \leq |S \cap V_c| \leq  u_c \; \ \forall c \in [C]\} \, .\]

The FMMSM problem asks to find a set $S \in \cI \cap \cC$
(i.e., fair and independent $S$)
that maximizes $f(S)$.
We use $\OPT$ for the optimal value, i.e., $\OPT = \max_{S \in \cI \cap \cC} f(S)$.
We assume that there exists a fair and independent set, i.e., $\cI \cap \cC \neq \emptyset$.
We say that an algorithm is an $\alpha$-approximation
if it outputs a set
$S$
with $f(S) \ge \alpha \cdot \OPT$.

For any set $S \subseteq V$ we define its
\emph{fairness violation} $\fv(S) := \sum_{c} \max\{|S \cap V_c| - u_c, \ell_c - |S \cap V_c|, 0\}$.
Note that if $S$ is upper-fair,
then $\fv(S) = \sum_c \max\{\ell_c - |S \cap V_c|, 0\}$.

\begin{lemma}[\cite{el2023fairness}, Appendix C] \label{lem:polytime-linear-fmmsm}
    There is an exact polynomial-time algorithm for FMMSM
    for the case when $f$ is a linear function.
\end{lemma}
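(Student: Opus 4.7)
The plan is to reduce maximizing a linear $f$ over $\cI \cap \cC$ to the maximum-weight common basis problem between two matroids, which is classically solvable in polynomial time by Edmonds' weighted matroid intersection algorithm. The idea is to introduce dummy ``slack'' elements, one per unit of slack in each fairness interval, so that both lower and upper bounds can be captured by a single partition matroid together with a basis (cardinality) requirement.

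Concretely, I would augment the ground set to $V' = V \cup D$ where $D = \bigcup_c D_c$ is a disjoint family of dummies with $|D_c| = u_c - \ell_c$, and define two matroids on $V'$: $\cM_1$ is the direct sum of $\cI$ with the free matroid on $D$ (so $S \in \cM_1$ iff $S \cap V \in \cI$), and $\cM_2$ is the partition matroid with parts $V_c \cup D_c$ and upper bound $u_c$. Assign weights $w'_e = w_e$ for $e \in V$ and $w'_e = 0$ for $e \in D$. The key observation is that if $S \subseteq V'$ is a basis of $\cM_2$ (i.e., $|S \cap (V_c \cup D_c)| = u_c$ for every $c$) that is also independent in $\cM_1$, then $T := S \cap V$ satisfies $T \in \cI$ and $|T \cap V_c| = u_c - |S \cap D_c| \in [u_c - |D_c|, u_c] = [\ell_c, u_c]$, so $T \in \cI \cap \cC$ with $f(T) = w'(S)$. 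Conversely, any $T \in \cI \cap \cC$ extends to such an $S$ by adding $u_c - |T \cap V_c| \in [0, |D_c|]$ arbitrary elements from $D_c$, and this extension preserves $w'(S) = f(T)$. Hence the two optimum values coincide and, given an optimal $S$, the set $S \cap V$ is an optimal FMMSM solution.

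To compute the max-weight basis of $\cM_2$ that is independent in $\cM_1$, I would truncate $\cM_1$ to rank $\sum_c u_c$ (which is valid because feasibility $\cI \cap \cC \ne \emptyset$ yields some $S^* \in \cI$ with $|S^*| \ge \sum_c \ell_c$, so $r(\cM_1) = r(\cI) + \sum_c (u_c - \ell_c) \ge \sum_c u_c$), making both matroids of common rank $\sum_c u_c$, and then run Edmonds' weighted matroid intersection after shifting all weights by a large positive constant so that every maximum-weight common independent set has full cardinality $\sum_c u_c$ and is therefore a common basis. The main conceptual step, and what I expect to be the only nontrivial point to verify, is setting up the dummy construction so that the lower bound constraints are encoded correctly through the forced basis cardinalities; after that, the polynomial-time guarantee is immediate from classical matroid intersection machinery.
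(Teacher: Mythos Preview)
Your reduction is correct and is the standard way to prove this: adding $u_c - \ell_c$ zero-weight dummies per color turns the two-sided fairness constraint into a pure partition-matroid basis constraint, and then weighted matroid intersection finishes the job. The bijection between feasible FMMSM sets and common bases of $\cM_1, \cM_2$ is exactly right, and your feasibility check $r(\cM_1) \ge \sum_c u_c$ via $\cI \cap \cC \ne \emptyset$ is the only point that needs care.

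Note that this paper does not actually prove the lemma; it simply cites \cite{el2023fairness}, Appendix~C. That reference uses essentially the same dummy-element idea (turning lower bounds into an equality constraint on a larger partition matroid) and appeals to weighted matroid intersection, so your approach matches what is cited. One minor simplification: the truncation of $\cM_1$ is unnecessary, since once a common basis of $\cM_2$ exists in $\cM_1$, shifting weights (or directly asking for a max-weight common independent set of cardinality $\sum_c u_c$, which is part of the standard Edmonds machinery) already forces full cardinality.
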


\iffull
\paragraph{Matroid intersection.}
\else
\noindent\textbf{Matroid intersection.}
\fi
Given two matroids
and a monotone submodular function $f$ defined on $V$,
we can define the problem of maximizing a submodular function
subject to a matroid intersection constraint similarly to FMMSM.
\iffull

\fi
In particular, if we ignore the lower bounds completely, FMMSM turns into the above matroid intersection problem for matroids $\cI$ and $\cU$.
\begin{theorem}[\cite{CalinescuCPV11}]
\label{thm:mi-greedy}
    The greedy algorithm gives a $1/3$-approximation to this problem.
\end{theorem}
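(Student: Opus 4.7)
The plan is to adapt the classical $\tfrac{1}{p+1}$-approximation analysis of greedy for $p$-systems to our setting with $p=2$. Let $\cI_1, \cI_2$ be the two matroids, $O$ an optimum feasible set with $f(O)=\OPT$, and $G = \{g_1, \ldots, g_k\}$ the greedy output listed in order of insertion; put $G_i = \{g_1, \ldots, g_i\}$ and $G_0 = \emptyset$. The greedy rule is: at step $i$, pick $g_i \in V \setminus G_{i-1}$ with $G_i \in \cI_1 \cap \cI_2$ maximizing $f(G_{i-1} + g_i) - f(G_{i-1})$; terminate when no such element exists, so $G$ is inclusion-maximal in $\cI_1 \cap \cI_2$.

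The heart of the argument is a combinatorial charging lemma: there exists a map $\sigma : O \setminus G \to \{1, \ldots, k\}$ such that $G_{i-1} + o \in \cI_1 \cap \cI_2$ whenever $\sigma(o)=i$, and $|\sigma^{-1}(i)| \le 2$ for every $i$. To build it, for each matroid $\cI_j$ I would invoke the matroid exchange property (in the spirit of \cref{lem:perfect-matching}) applied to $O$ and $G$ to obtain an injection $\pi_j : O \setminus G \to G \setminus O$ satisfying $G + o - \pi_j(o) \in \cI_j$. Any $o \in O \setminus G$ must then have been blocked, for each $j \in \{1,2\}$, at the greedy step where $\pi_j(o)$ was inserted (otherwise $G + o$ would still be independent in $\cI_j$, contradicting maximality of $G$). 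Setting $\sigma(o)$ to the smaller of the two step indices $\pi_1^{-1}(o), \pi_2^{-1}(o)$ charges at most two elements of $O$ to each $g_i$, and simultaneously witnesses that $G_{\sigma(o)-1} + o$ is independent in both matroids.

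With the charging in place, the greedy rule and submodularity give, for each $o \in \sigma^{-1}(i)$,
\[ f(G_i) - f(G_{i-1}) \ \ge\ f(G_{i-1} + o) - f(G_{i-1}) \ \ge\ f(G + o) - f(G), \]
where the last inequality uses $G_{i-1} \subseteq G$. Summing over $o$ and invoking $|\sigma^{-1}(i)| \le 2$ yields $\sum_{o \in O \setminus G}[f(G+o)-f(G)] \le 2 f(G)$. Combined with the standard telescoping bound
\[ f(O) \ \le\ f(O \cup G) \ \le\ f(G) + \sum_{o \in O \setminus G}[f(G+o)-f(G)], \]
this gives $\OPT \le 3 f(G)$ and hence $f(G) \ge \OPT / 3$.

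The main obstacle is the charging lemma — specifically, establishing the bound $|\sigma^{-1}(i)| \le 2$, which is the only place the two matroids enter quantitatively and where genuine matroid structure (exchange/augmentation) is required. Once it is in hand, the conclusion follows from a routine combination of greedy optimality, submodularity, and monotonicity.
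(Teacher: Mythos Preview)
The paper does not give its own proof of \cref{thm:mi-greedy}; it simply cites the result from \cite{CalinescuCPV11}. Your argument is a valid instantiation of the classical $1/(p{+}1)$ analysis for $p$-systems, specialized to two matroids, and the high-level structure (charging lemma plus submodularity telescoping) is correct.

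One point does need tightening. The injection $\pi_j : O \setminus G \to G \setminus O$ need not exist as stated: maximal common independent sets can have different sizes, so one may well have $|O \setminus G| > |G \setminus O|$ (for instance, take partition matroids on $\{a,b,c\}$ with parts $\{a\},\{b,c\}$ and $\{a,b\},\{c\}$ of capacity~$1$; greedy can return $G=\{b\}$ while $O=\{a,c\}$). The fix is that $\pi_j$ is only needed on $D_j := \{o \in O \setminus G : G + o \notin \cI_j\}$, and there a Hall argument gives the injection (each such $o$ lies in the $\cI_j$-span of $G$, and independence of $O$ in $\cI_j$ yields $|S| \le |N(S)|$ for every $S \subseteq D_j$). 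For $o$ with $G + o \in \cI_j$ you do not need $\pi_j(o)$ at all, since $G_{i-1}+o \in \cI_j$ for every $i$; note that maximality of $G$ holds in $\cI_1 \cap \cI_2$, not in each $\cI_j$ separately, so your parenthetical ``contradicting maximality of $G$'' is slightly off, but maximality in the intersection still guarantees that at least one of $\pi_1(o),\pi_2(o)$ is defined. Also, ``$\pi_j^{-1}(o)$'' should read ``the index $i$ with $g_i = \pi_j(o)$''. With these small adjustments, the bound $|\sigma^{-1}(i)| \le 2$ and the remainder of your proof go through verbatim.
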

\begin{theorem}[\cite{Lee2010}] \label{thm:local_search}
    For any $\delta > 0$
    there is a polynomial-time algorithm
    that gives a $(0.5-\delta)$-approximation
    to this problem.
\end{theorem}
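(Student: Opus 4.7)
The plan is to implement a local search algorithm. Start with any feasible set $S \in \cI \cap \cU$ produced, e.g., by the greedy algorithm of \cref{thm:mi-greedy}. Define a $p$-swap to be a local move that removes up to $p$ elements from $S$ and adds up to $p+1$ elements, yielding a set that remains independent in both matroids. Iterate: while some $p$-swap increases $f(S)$ by more than a multiplicative factor $(1+\eta)$, apply it. With $p = \Theta(1/\delta)$ and $\eta = \mathrm{poly}(\delta)/|V|$, the number of iterations is polynomial since $f(S)$ is bounded and grows by at least $(1+\eta)$ each step; each iteration examines $\binom{|V|}{O(1/\delta)}$ candidate swaps, which is polynomial for fixed $\delta$.

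The heart of the analysis is a matroid-intersection exchange decomposition. Given the local optimum $S$ and any feasible $O \in \cI \cap \cU$, I would construct a collection of disjoint swap pairs $(A_i, B_i)$ with $A_i \subseteq O \setminus S$, $B_i \subseteq S \setminus O$, and $|A_i|, |B_i| \le p$, such that:
\begin{itemize}
    \item every candidate $S \setminus B_i \cup A_i$ is independent in both matroids;
    \item each element of $O \setminus S$ appears in exactly one swap; and
    \item each element of $S \setminus O$ appears in at most two swaps.
\end{itemize}
Such a decomposition can be obtained by considering the bipartite exchange graphs for $S$ and (an appropriate extension of) $O$ in each of the two matroids separately, using \cref{lem:perfect-matching} to find perfect matchings, then combining the two matchings into a $2$-regular multigraph on $S \triangle O$ that decomposes into cycles. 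Short cycles of length at most $2p$ directly yield valid swaps; long cycles are broken into length-$\le 2p$ segments at the cost of ``sacrificing'' an $O(1/p) = O(\delta)$ fraction of elements (the cut endpoints), which is where the loss in the approximation ratio enters.

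Given this decomposition, local optimality yields $f(S) - f(S \setminus B_i \cup A_i) \ge -\eta f(S)$ for every $i$. Summing these inequalities and applying submodularity (in the style of \cref{fact:submod_k}, bounding the sum of marginals of $A_i$ into $S$ from below by $f(S \cup O) - f(S)$ minus the contribution of discarded boundary elements, and the sum of marginals of $B_i$ out of $S$ from above by roughly $2 f(S)$), I would derive $2 f(S) \ge f(S \cup O) - O(\delta) f(O) - O(\eta |V|) f(S) \ge (1 - O(\delta)) f(O)$ using monotonicity. Rearranging and rescaling $\delta$ gives $f(S) \ge (1/2 - \delta) \OPT$.

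The main obstacle is the exchange decomposition with the correct multiplicity bounds, since the element of $S \setminus O$ appearing in up to two swaps is exactly what produces the factor $1/2$. Extracting this from the two matroid exchange graphs requires routing cycles consistently across both matroids and shortcutting long cycles without violating independence in either one; doing so while losing only an $O(\delta)$ fraction (rather than a constant) is the delicate step and is what ties $p$ to $1/\delta$.
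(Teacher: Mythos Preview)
The paper does not prove \cref{thm:local_search}; it is quoted from \cite{Lee2010} and used purely as a black box (to produce the initial high-value set $Y_0$ in \cref{thm:main-randomized}). There is therefore nothing in the paper to compare your proposal against.

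Your sketch is in the spirit of the Lee--Sviridenko--Vondr\'ak argument, but one step is oversimplified. You write that the two perfect matchings (one per matroid, obtained via \cref{lem:perfect-matching}) combine into a $2$-regular multigraph on $S \triangle O$ whose cycle segments directly give feasible swaps. That is not quite what happens: the two matchings live in \emph{different} exchange graphs, and a segment of the combined cycle need not be a valid swap in \emph{both} matroids simultaneously. The actual argument in \cite{Lee2010} establishes a more delicate exchange property (for $k$ matroids, a mapping $\pi : O\setminus S \to \binom{S\setminus O}{\le k}$ with bounded load on each $s \in S\setminus O$, such that $S - \pi(o) + o$ is feasible in all $k$ matroids), and proving this requires more than overlaying two matchings. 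Your cycle-breaking intuition for the $O(\delta)$ loss is morally right, but the feasibility of each resulting swap in both matroids is the part that needs the real work.
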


\section{Our algorithm}
\label{sec:our-algo}

In this section we describe our algorithms:
randomized (\cref{thm:main-randomized}) and deterministic
\iffull
(\cref{sec:deterministic}, \cref{thm:main-deterministic}).
See also the pseudocode provided in \Cref{alg:main-randomized}.
We first need to introduce some notions.

\else
(\cref{sec:deterministic}).
We first need to introduce some notions.
\fi
The proof of \cref{thm:main-randomized}
will begin by constructing a maximum-cardinality
independent and fair set $P$,
which will stay unchanged throughout the execution.
We also construct an independent and upper-fair set $Y$ of high $f$-value.
We will use $P$ as a source of fairness and iteratively trade off $Y$'s value for $P$'s elements in colors that are undersaturated by $Y$.

\begin{definition}
    Given $Y$ and $P$ as above,
    we say that a color $c \in [C]$ is \emph{undersaturated} if $|Y \cap V_c| < |P \cap V_c|$, and \emph{oversaturated} if $|Y \cap V_c| > |P \cap V_c|$.
\end{definition}

The technical crux of the proof of \cref{thm:main-randomized} is \cref{lem:exist-k-paths}, in which we show the existence of many disjoint structures, each of which can be used to advance our fairness objective.
\iffull
We will call them augmenting or alternating,
as they indeed correspond to such paths
in the appropriately defined matroid intersection exchange graph that we consider in the proof of \cref{lem:exist-k-paths}.
\fi

\begin{definition} \label{def:paths}
    Let $Y$ be an independent and upper-fair set,
    and let $X \subseteq V$.
    Define the result $Y'$ of applying $X$ to $Y$ as the symmetric difference $Y' = Y \triangle X$.
    We say that
    $X$ is \emph{alternating} (with respect to $Y$) if
    $Y'$ is independent ($Y' \in \cI$)
    and
    there is exactly one undersaturated color $c' \in [C]$ and one oversaturated color $c'' \in [C]$ such that for all $c \in [C]$,
        \[
        |Y' \cap V_c| = |Y \cap V_c| + \begin{cases}
            1 & \text{ for $c = c'$,} \\
            -1 & \text{ for $c = c''$,} \\
            0 & \text{ for $c \ne c',c''$.}
        \end{cases}
        \]
    
    We say that $X$ is \emph{augmenting} if all the above conditions are satisfied, except that there is no color $c''$.
    \iffull

    \fi
    In both cases, we say that $X$ \emph{increases $c'$}.
\end{definition}
Note that we have $|Y'|=|Y|$ if $X$ is alternating and $|Y'|=|Y|+1$ if $X$ is augmenting. Also, $Y'$ is upper-fair, since the only color where it has more elements than $Y$ is $c'$, and we have $|Y' \cap V_{c'}| = |Y \cap V_{c'}| + 1 < |P \cap V_{c'}| + 1$ (and $P$ is fair).

\begin{lemma}
    \label{lem:exist-k-paths}
    Let $Y$ and $P$ be two independent and upper-fair sets
    with $|Y| \le |P|$.
    Denote
    \[ k = \sum_{c \in [C]} \max(0, |P \cap V_c| - |Y \cap V_c|) \,. \]
    Then we may find in polynomial time a collection $X_1, ..., X_k$ of disjoint subsets of $Y \cup P$,
    of which at least $|P|-|Y|$ many are augmenting
    and the
    rest
    are alternating.
    Moreover, for every undersaturated color $c$, exactly $|P \cap V_c|-|Y \cap V_c|$ many of the paths increase $c$.
\end{lemma}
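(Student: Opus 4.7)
The plan is to reduce the problem to a matroid intersection instance, apply a min-cut argument to obtain $k$ vertex-disjoint augmenting paths, and refine each into a valid $X_i$ via asymmetric shortcutting. To begin, I would define the partition matroid $\mathcal{M} = \{S \subseteq V : |S \cap V_c| \le |P \cap V_c| \text{ for all } c\}$, and for each oversaturated color $c$ choose $|Y \cap V_c| - |P \cap V_c|$ elements from $(Y \setminus P) \cap V_c$ to form a set $R \subseteq Y \setminus P$. Then $Y' := Y \setminus R$ satisfies $Y' \in \cI \cap \mathcal{M}$ (by downward closure and construction) and $P \in \cI \cap \mathcal{M}$, with $|P| - |Y'| = k$.

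Next I would work with the matroid intersection exchange graph $D = D(\cI, \mathcal{M}, Y')$ restricted to the ground set $V' := Y \cup P$, with source set $S_1 = \{z \in V' \setminus Y' : Y' + z \in \cI\}$ and sink set $S_2 = \{z \in V' \setminus Y' : Y' + z \in \mathcal{M}\}$. Since $P$ is a common independent set of size $|Y'| + k$, matroid intersection duality (Edmonds' theorem) forces every $S_1$-$S_2$ vertex cut in $D$ to have size at least $k$ -- otherwise the cut would upper-bound the maximum common independent set strictly below $|P|$, contradicting $P \in \cI \cap \mathcal{M}$. Menger's theorem, implemented via max-flow, then yields $k$ vertex-disjoint $S_1$-$S_2$ augmenting paths $\pi_1, \ldots, \pi_k \subseteq V'$ in polynomial time, each ending at a sink in an undersaturated color. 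Moreover, since the deficit of $\mathcal{M}$ decomposes by color, the sinks can be distributed so that each undersaturated color $c$ is the terminus of exactly $|P \cap V_c| - |Y \cap V_c|$ of the paths.

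Finally I would convert each $\pi_j$ into a valid $X_j$ per Definition \ref{def:paths} by asymmetric shortcutting. Since $Y = Y' \cup R$ rather than $Y'$ itself, the naive set $Y \triangle \pi_j$ need not be independent; the fix is to split the $k$ paths into $|P| - |Y|$ that remain augmenting (setting $X_j = \pi_j$ after appropriate compression) and $|R| = \sum_c \max(0, |Y \cap V_c| - |P \cap V_c|)$ that are each paired with a distinct $r \in R$ of the matching oversaturated color to form alternating $X_j$'s. The adjoined $r$ compensates for the extra $R$-mass in $Y$, and independence $Y \triangle X_j \in \cI$ is certified by applying Lemma \ref{lem:perfect-matching} (and its uniqueness variant Lemma \ref{lem:unique-perfect-matching}) to the induced single-matroid exchange graphs between the relevant intermediate sets; the ``asymmetric'' aspect is that the source side of $\pi_j$ is replaced/extended by an $R$-element while the sink side is preserved, which is what keeps the color counts on all other colors unchanged. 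The hardest step will be this shortcutting: one must simultaneously guarantee $\cI$-independence of $Y \triangle X_j$, the exact two-color $\pm 1$ change pattern of Definition \ref{def:paths}, and mutual disjointness of the $X_j$'s -- and it is precisely here that the matroid exchange toolkit, rather than naive application of matroid intersection augmentation, becomes essential.
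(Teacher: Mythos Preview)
There are two genuine gaps. First, the Menger step is not justified: Edmonds' min--max formula is $\max |I| = \min_U r_1(U) + r_2(V\setminus U)$, and a vertex cut in the exchange graph $D(Y')$ is not such a partition $U$; you give no argument converting a small cut into a small rank-sum witness, and this is not a standard consequence of matroid-intersection duality. Even granting $k$ vertex-disjoint $S_1$--$S_2$ paths (they do exist --- the paper proves this constructively in \cref{lemma:exist-k-paths-det}, not via duality), arbitrary Menger paths are not chordless, so $Y' \triangle \pi_j \in \cI$ is not automatic; and Menger gives no control over which sink colors are reached, so the claimed per-color endpoint distribution is unsupported. Second --- and this is where the lemma really lives --- the conversion from $Y'$-paths to $Y$-sets is only sketched. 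Even if $Y' \triangle \pi_j \in \cI$, what you need is $(Y' \triangle \pi_j) \cup R \in \cI$ (augmenting case) or $(Y' \triangle \pi_j) \cup (R - r) \in \cI$ (alternating case), and nothing prevents the re-adjoined $R$-elements from forming $\cI$-circuits with the new path elements. Invoking \cref{lem:unique-perfect-matching} would require a unique matching in the $\cI$-exchange graph between $Y$ and $Y \triangle X_j$, but all your arc information is relative to $Y'$; rebuilding it relative to $Y$ means the detour through $Y'$ has bought nothing.

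The paper sidesteps both issues by never truncating $Y$. It works directly with $Y$ and $P$ and builds two matchings: $M_\to$ is constructed \emph{by hand}, pairing same-color elements of $Y$ and $P$ (this is where the partition structure of the fairness constraint is exploited, and is why color counts along each path cancel except at the endpoints), while $M_\leftarrow$ comes from the single-matroid perfect-matching lemma (\cref{lem:perfect-matching}) applied only to $\cI$. The $k$ paths are then the source-to-sink walks in $M_\to \cup M_\leftarrow$; sources lie in undersaturated colors and $Y$-sinks in oversaturated ones by construction, so the per-color count is automatic. Shortcutting is performed \emph{only} along $\cI$-side chords, which preserves the same-color structure of the $M_\to$ edges (hence the color condition of \cref{def:paths}) and simultaneously sets up a unique perfect matching certifying $Y \triangle X_i \in \cI$ directly via \cref{lem:unique-perfect-matching}.
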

\iffull
\begin{proof}
    To simplify notation, we assume without loss of generality that $Y \cap P = \emptyset$. Otherwise we could work with sets $Y \setminus P$ and $P \setminus Y$.

    Consider the matroid intersection exchange graph for $Y$ and $P$ with respect to matroids $\cI$ and $\cU$.
    This is defined as the \emph{directed} bipartite graph obtained by taking the union of the exchange graph for $Y$ and $P$ with respect to $\cI$, whose edges we direct right-to-left (from $P$ to $Y$), and of the exchange graph for $Y$ and $P$ with respect to $\cU$, whose edges we direct left-to-right (from $Y$ to $P$). That is, we have edges
    \begin{align*}
        \{ y \to p : Y + p - y \in \cU \}
        \quad \text{and} \quad
        \{ y \leftarrow p : Y + p - y \in \cI \} \,.
    \end{align*}
    Inside this graph we will carefully construct a subgraph consisting of two matchings $M_\to$ (directed left-to-right) and $M_{\leftarrow}$ (directed right-to-left). The augmenting and alternating paths will be found in that subgraph.

    To construct $M_\leftarrow$, we first define $T_P := \{ p \in P : Y + p \in \cI \}$.
    We have $|T_P| \ge |P|-|Y|$ (by repeated application of the matroid augmentation property).
    We will call the elements in $T_P$ \emph{$P$-sinks}.
    Let $T_P'$ be an arbitrary subset of $T_P$ of size exactly $|P|-|Y|.$
    We then invoke \cref{lem:perfect-matching} on the exchange graph for $Y$ and $P \setminus T_P'$ with respect to $\cI$ (which is a subgraph of our matroid intersection exchange graph).
    It implies the existence of
    a perfect matching between $Y$ and $P \setminus T_P'$;
    since one exists, we can find one in polynomial time.
    We obtain the matching $M_\leftarrow$
    by removing the edges of that matching that are incident to $T_P \setminus T_P'$.
    Then, $M_\leftarrow$ matches every vertex of $P \setminus T_P$.

    We construct the matching $M_\to$ manually by matching up as many elements of the same color between $Y$ and $P$ as possible. That is, for every color $c \in [C]$
    we add $\min(|Y \cap V_c|, |P \cap V_c|)$ edges from $Y \cap V_c$ to $P \cap V_c$ to the matching $M_\to$.

    We define the set $S$ of \emph{sources} as all vertices in $P$ that did not get matched in $M_\to$. Note that they are only in undersaturated colors, and their number is exactly $k$.
    (In principle it is possible to have a source that is also a $P$-sink;
    this can happen if $Y$ is not maximal in $\cI \cap \cU$.)

    We also define the set $T_Y$ of \emph{$Y$-sinks} as all vertices in $Y$ that did not get matched in $M_\to$.
    Note that they are only in oversaturated colors, and their number is exactly $k-(|P|-|Y|)$, as we have
    \begin{align*}
        |P| - |Y| &= \sum_{c \in [C]} |P \cap V_c| - |Y \cap V_c| \\
        &= \sum_{\text{$c$: undersaturated}} \left( |P \cap V_c| - |Y \cap V_c| \right) - \sum_{\text{$c$: oversaturated}} \left( |Y \cap V_c| - |P \cap V_c| \right) \\
        &= k - |T_Y| .
    \end{align*}
    To recap, we have $k$ sources $S$ (all in $P$),
    $k-(|P|-|Y|)$ $Y$-sinks $T_Y$,
    and
    at least
    $|P|-|Y|$ $P$-sinks $T_P$.
    Moreover, for every undersaturated color $c$, exactly $|P \cap V_c|-|Y \cap V_c|$ many of the 
    sources are of color $c$.

    Now we show how to construct $k$ vertex-disjoint simple paths in $M_\to \cup M_\leftarrow$ that start at sources ($S$) and end at sinks ($T_Y \cup T_P$). For every path, we proceed
    as follows:
    \begin{itemize}
        \item start at at unused source (in $P$),
        \item whenever at a vertex of $P$, stop if that vertex is a sink (in $T_P$); otherwise it has an incident outgoing edge of $M_\leftarrow$; follow this edge,
        \item whenever at a vertex of $Y$, stop if that vertex is a sink (in $T_Y$); otherwise it has an incident outgoing edge of $M_\to$; follow this edge.
    \end{itemize}
    Since every path must terminate at a different sink,
    at least $k - (k-(|P|-|Y|)) = |P|-|Y|$ of the $P$-sinks will be used.
    Furthermore, a path cannot revisit a vertex, since the indegree of every vertex is at most 1 and sources have no incoming edges.
    This implies that all paths are simple and vertex-disjoint.

    The $k$ paths constructed above might not yet be augmenting/alternating paths in the matroid intersection exchange graph, as they may contain chords;
    in general, only chordless paths guarantee that applying them preserves independence.
    (Matroid intersection algorithms usually apply shortest paths, which are chordless.)
    We need to shortcut them; however, doing so naively could destroy the property that all left-to-right edges in the paths are between elements of the same color, which we require to satisfy the condition in \cref{def:paths}.

    We carry out the shortcutting as follows. Let $X' = (p_1, y_1, p_2, y_2, ...)$ be one of the $k$ paths.
    As long as there exists a chord of the form $(p_i, y_j)$ with $j>i$ (i.e., the directed edge $y_j \leftarrow p_i$ exists in the matroid intersection exchange graph; equivalently, $Y + p_i - y_j \in \cI$), replace the corresponding subpath with this chord (i.e., remove the vertices $y_i, p_{i+1}, ..., p_j$ from the sequence $X'$).
    Note that we do not use chords of the form $(y_i, p_j)$ with $j>i$.
    Doing this to each of the $k$ paths obtains our final collection $X_1, ..., X_k$.
    
    We now verify that it satisfies the statement of the lemma.
    As the paths before shortcutting were vertex-disjoint,
    they remain so afterwards.
    We claim that the paths ending at $P$-sinks
    (recall that there are at least $|P|-|Y|$ many)
    yield augmenting sets,
    and the paths ending at $Y$-sinks
    yield alternating sets.
    Consider a path $X_i = (p_1, y_1, p_2, y_2, ...)$.
    Note that $Y' = Y \triangle X_i = Y \cup \{p_1,p_2,...\} \setminus \{y_1,y_2,...\}$.
    The color-count condition of \cref{def:paths} follows easily from the property that every left-to-right edge $y_i \to p_{i+1}$ in $X_i$ belongs to $M_\to$, so $y_i$ and $p_{i+1}$ are of the same color.
    Thus we can take $c'$ to be the color of $p_1$.
    If the last element of $X_i$ is in $Y$ (a $Y$-sink),
    we take $c''$ to be its color.

    It remains to show that $Y' = Y \triangle X_i \in \cI$. This argument closely follows that of \cite{Schrijver}, Theorem 41.2. Let us first consider the case where $X_i$ ends at a $Y$-sink: $X_i = (p_1, y_1, p_2, y_2, ..., p_t, y_t)$. We want to apply \Cref{lem:unique-perfect-matching} on the exchange graph for $Y$ and $Y'$ with respect to $\cI$. This is a bipartite graph on $\{y_1, ..., y_t\}$ on the left side and $\{p_1, ..., p_t\}$ on the right side,
    and it is equal to the corresponding induced subgraph of edges going right-to-left in the matroid intersection exchange graph; we need to show that it contains a unique perfect matching.
    We proceed iteratively: $p_1$ cannot be connected to any $y_j$ with $j>1$, for otherwise we would have a shortcut. So any matching must have $p_1$ matched to $y_1$. Removing these two vertices, we consider the out-neighborhood of $p_2$. Again, $p_2$ has no shortcuts to later $y_j$, so its only out-neighbor (after the removal of $y_1$) is $y_2$. So, $p_2$ must be matched to $y_2$. We may continue inductively to construct the unique matching between $\{y_1, ..., y_t\}$ and $\{p_1, ..., p_t\}$.
    
    The case where $X_i = (p_1, y_1, p_2, y_2, ..., p_t, y_t, p_{t+1})$ ends at a $P$-sink
    is similar, with one more step. The first case shows that $Z = Y \cup \{p_1, ..., p_t \} \setminus \{y_1, ..., y_t\}$ is independent. We need only show that $Z + p_{t+1}$ is independent. Note that $p_{t+1} \in T_P$, meaning that $Y + p_{t+1} \in \mathcal{I}$.
    By the matroid augmentation property, $Y + p_{t+1}$ must have an element which can be added to $Z$ while preserving independence.
    The possible candidates are $(Y + p_{t+1} ) \setminus Z = \{y_1,...,y_t,p_{t+1}\}$.
    However, no $y_j$ can be added:
    since $p_1, ..., p_t \not \in T_P$, we know that $Y \cup \{p_1, ..., p_t\}$ has rank $|Y|$,
    and $Z + y_j \subseteq Y \cup \{p_1, ..., p_t\}$
    would have rank $|Z|+1 = |Y|+1$ if $Z+y_j$ were independent.
    Therefore the only possible candidate is $p_{t+1}$, and so we have that $Y' = Z +p_{t+1}$ is independent. 
\end{proof}
\else
The full proof can be found \infullversion.
\begin{proofsketch}
    We consider the so-called (directed) matroid intersection exchange graph for $Y$ and $P$ with respect to $\cI$ and $\cU$.
    Inside this graph we carefully construct a subgraph consisting of two matchings $M_\leftarrow$ and $M_\to$.
    $M_\leftarrow$ is obtained by invoking \cref{lem:perfect-matching} on a subgraph,
    whereas we construct $M_\to$ manually by matching elements of the same colors between $Y$ and $P$.
    We then algorithmically construct the $k$ paths between appropriately defined sets of sources and sinks in $M_\leftarrow \cup M_\to$.
    Next, we carry out an asymmetric shortcutting process, whose aim is to make sure that the new solution will be independent in the matroid, but also not disrupt the color structure.
    This allows us to prove that the paths following this postprocessing satisfy \cref{def:paths}.
\end{proofsketch}
\fi
\iffull
Now we are ready to state and prove our main result.
\else
\vspace{-0.5cm}
\fi
\begin{theorem} \label{thm:main-randomized}
    There is a randomized polynomial-time algorithm for FMMSM
    parametrized by $\eps \in (0,1)$
    that outputs a set $S \in \cI \cap \cU$ (i.e., independent and upper-fair) such that
    \iffull
    \begin{itemize}
    \else
    \begin{itemize}[topsep=0pt,itemsep=0pt]
    \fi
        \item $\bE[|S|] \ge (1-\eps) N$
        with a high-probability tail bound:
        \\
        for $\delta>0$,
        $\bP[|S| < (1-\delta) (1-\eps) N] \le \exp(-\Omega_\delta(N))$
        \item $\bE[f(S)] \ge 0.499 \cdot \eps \cdot \OPTMatInt$
        \item for every $c \in [C]$ we have
        $\bE[|S \cap V_c|] \ge (1-\eps) \ell_c$
        \item with a high-probability tail bound on the total fairness violation:
        \\
        for $\delta > 0$,
        $\bP[\fv(S) > (1+\delta) \eps \sum_c \ell_c] \le \exp\rb{-\Omega_\delta\rb{\sum_c \ell_c}}$
    \end{itemize}
    where $N$ is the maximum size of a set in $\cI \cap \cU$,
    and $\OPTMatInt$ is the maximum $f$-value of a set in $\cI \cap \cU$ (clearly we have $\OPTMatInt \ge \OPT$ as $\cC \subseteq \cU$).
\end{theorem}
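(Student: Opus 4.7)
The plan is to design a randomized algorithm that starts from two reference sets and iteratively trades off $f$-value for fairness using the paths given by \cref{lem:exist-k-paths}. Preprocessing: compute a maximum-cardinality independent and fair set $P$ in polynomial time via \cref{lem:polytime-linear-fmmsm} with the all-ones linear objective; and an independent upper-fair set $Y_0 \in \cI \cap \cU$ via \cref{thm:local_search} with $\delta = 0.001$, so $f(Y_0) \ge 0.499 \cdot \OPTMatInt$. We may assume $|Y_0| \le |P|$ (which follows from $|P| = N$ under feasibility via a matroid-exchange argument, or else by truncation). Let $k_0 := \sum_c \max(0, |P \cap V_c| - |Y_0 \cap V_c|)$ and $T := \lfloor (1-\eps) k_0 \rfloor$. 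For $t = 0, \ldots, T-1$, invoke \cref{lem:exist-k-paths} on $Y_t$ and $P$ to obtain $k_t = k_0 - t$ disjoint paths $X_1, \ldots, X_{k_t}$, sample $i \in [k_t]$ uniformly, and set $Y_{t+1} := Y_t \triangle X_i$. Output $S := Y_T$. Inductively by \cref{def:paths}, $Y_t \in \cI \cap \cU$ throughout, and $|Y_t| \le |P|$ (augmenting paths exist only when $|Y_t| < |P|$).

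For fairness and size, I would use a sampling-without-replacement coupling. By \cref{lem:exist-k-paths}, exactly $|P \cap V_c| - |Y_t \cap V_c|$ of the $k_t$ paths in round $t$ increase color $c$, so a uniform draw targets color $c$ with probability $(|P \cap V_c| - |Y_t \cap V_c|)/k_t$, and every path decreases total undersaturation by exactly 1. This couples exactly with drawing $T$ balls without replacement from an urn of $k_0$ balls, one per unit of initial undersaturation and labeled by its color. Linearity of expectation then gives, for every undersaturated color $c$,
\begin{equation*}
\bE[|S \cap V_c|] = |Y_0 \cap V_c| + (1-\eps)(|P \cap V_c| - |Y_0 \cap V_c|) \ge (1-\eps)|P \cap V_c| \ge (1-\eps)\ell_c,
\end{equation*}
while for colors that are not undersaturated, $|S \cap V_c| \ge |P \cap V_c| \ge \ell_c$ holds deterministically (alternating paths stop decreasing a color once its count equals $|P \cap V_c|$). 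Summing: $\bE[|S|] \ge (1-\eps)|P| = (1-\eps)N$. The high-probability tail bounds on $|S|$ and on $\fv(S) \le \sum_c \max(0, |P \cap V_c| - |S \cap V_c|)$ follow from concentration for sampling without replacement (e.g.\ Hoeffding--Serfling).

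For the $f$-value, I would apply \cref{fact:submod_k}. At round $t$, set $X'_i := X_i \cap Y_t$; vertex-disjointness of paths makes $X'_1, \ldots, X'_{k_t}$ disjoint subsets of $Y_t$, so $\sum_{i=1}^{k_t} f(Y_t \setminus X'_i) \ge (k_t - 1) f(Y_t)$. Monotonicity gives $f(Y_t \triangle X_i) \ge f(Y_t \setminus X'_i)$, hence $\bE[f(Y_{t+1}) \mid Y_t] \ge (1 - 1/k_t) f(Y_t)$. Since $k_t = k_0 - t$ is deterministic, iterating and telescoping:
\begin{equation*}
\bE[f(S)] \ge f(Y_0) \prod_{t=0}^{T-1} \frac{k_0 - t - 1}{k_0 - t} = f(Y_0) \cdot \frac{k_0 - T}{k_0} \ge \eps \cdot f(Y_0) \ge 0.499 \cdot \eps \cdot \OPTMatInt.
\end{equation*}

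The main obstacle is formalizing the urn coupling, since each round's collection of paths is re-derived from the current $Y_t$: one must argue that the \emph{number} of paths targeting each color depends only on the color counts $(|Y_t \cap V_c|)_c$, not on the finer structure of $Y_t$. This is precisely what \cref{lem:exist-k-paths} guarantees, so the transition probabilities for undersaturation match the urn exactly, enabling clean expectation and tail analysis. Maintaining $Y_t \in \cI \cap \cU$ and $|Y_t| \le |P|$ across iterations is immediate from \cref{def:paths} and the asymmetric shortcutting used in the lemma's proof; edge cases ($k_0 = 0$ or very small, and rounding of $T$) are handled trivially without affecting the stated bounds.
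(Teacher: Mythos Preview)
Your approach is essentially the paper's: same two reference sets $P,Y_0$, same iterative application of a uniformly random path from \cref{lem:exist-k-paths}, same urn coupling for per-color fairness, and the same telescoping use of \cref{fact:submod_k} for the $f$-value. Two points, however, are treated too casually.

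First, rounding $T=\lfloor(1-\eps)k_0\rfloor$ is \emph{not} ``handled trivially.'' In your displayed equation you replace $T/k_0$ by $1-\eps$, but these differ by up to $1/k_0$, so you only obtain $\bE[|S\cap V_c|]\ge\bigl((1-\eps)-1/k_0\bigr)\ell_c$. When $k_0$ is small (e.g.\ $k_0=1$ and $\ell_c=1$, exactly the MSPM-type case the paper stresses), this says nothing. The paper fixes this by \emph{randomizing} the number of iterations so that its expectation is exactly $(1-\eps)k_0$; the footnote there explains precisely why a deterministic floor or ceiling does not suffice.

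Second, for the fairness-violation tail bound your inequality $\fv(S)\le\sum_c\max(0,|P\cap V_c|-|S\cap V_c|)$ equals $k_0-T\approx\eps k_0$, which is deterministic but is a bound in $k_0$, not in $\sum_c\ell_c$. Since $|P\cap V_c|$ can far exceed $\ell_c$, one can have $k_0\gg\sum_c\ell_c$, and then $\eps k_0$ does not imply $\fv(S)\le(1+\delta)\eps\sum_c\ell_c$. You must analyze $\fv(S)=\sum_c\max(0,\ell_c-|S\cap V_c|)$ directly; the paper does this with a ``striped balls'' refinement of the urn (marking, for each undersaturated color $c$, only $\max(0,\ell_c-|Y_0\cap V_c|)$ of its balls) and then applies hypergeometric concentration to the striped count. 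Your Hoeffding--Serfling gesture points in the right direction, but the quantity you wrote down is the wrong one to concentrate.
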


We stress that $S$ is upper-fair with probability 1, not only in expectation.
We also remark that one can show a similar tail bound for every individual $\ell_c$, though the right-hand side $\exp(-\Omega_\delta(\ell_c))$ may not be meaningful unless $\ell_c$ is large.
On the other hand, no such bound can be shown for the $f$-value, which
in the worst case
can be concentrated on a single element of the universe.

The guarantee $\bE[f(S)] \ge 0.499 \cdot \eps \cdot \OPTMatInt$ of the second bullet point comes from using the local search algorithm of \Cref{thm:local_search} as a subroutine. We can instead use the simpler algorithm of \Cref{thm:mi-greedy} to get a slightly worse guarantee of $\bE[f(S)] \ge \frac{1}{3} \cdot \eps \cdot \OPTMatInt$;
we do so in our experimental evaluation.

\iffull
\begin{proofof}{\cref{thm:main-randomized}}
\else
We give a brief sketch; the full proof can be found \infullversion.
\begin{proofsketchof}{\cref{thm:main-randomized}}
\fi
    As the first step, we compute a maximum-cardinality fair and independent set~$P$, which may be done in polynomial time by \cref{lem:polytime-linear-fmmsm}.
\iffull
    We can say that $|P|=N$,
    i.e.,
    the maximum size of an independent and fair set
    is the same as the maximum size of an independent and upper-fair set.
    To see this, suppose that there was an independent and upper-fair set $F$ with $|F| > |P|$;
    then we could apply \cref{lem:exist-k-paths} to $P$ and $F$ to obtain an augmenting set $X$, and $P \triangle X$ would be a larger independent and fair set, a contradiction.
    
\else
    By invoking \cref{lem:exist-k-paths} we can show that $|P|=N$.
\fi
    As the second step, we compute a high-value independent and upper-fair set $Y_0$.
    Using the algorithm of \cref{thm:local_search} (\cite{Lee2010}) (with $\delta = 10^{-3}$) we get that
\iffull
    \begin{equation} \label{eq:fY0}
        f(Y_0) \ge 0.499 \cdot \OPTMatInt \,.
    \end{equation}
\else
    $f(Y_0) \ge 0.499 \cdot \OPTMatInt$.
\fi
    We denote
    \[ k(Y) = \sum_{c \in [C]} \max(0, |P \cap V_c| - |Y \cap V_c|) \]
    for any solution $Y$,
    and $k := k(Y_0)$ to shorten notation.
\iffull

\fi
    We will perform a number $I$ of iterations
    which will be $(1-\eps)k$ in expectation.
    More precisely, let us set
    $I = \ceil{(1-\eps)k}$ with probability $(1-\eps)k-\floor{(1-\eps)k}$, and $\floor{(1-\eps)k}$
\iffull
    otherwise.\footnote{Ideally we would just set $I = (1-\eps)k$, but this number can be fractional, and using a fixed value of $\floor{(1-\eps)k}$ or $\ceil{(1-\eps)k}$ would lead to losses in objective value, cardinality, or fairness.
    For example, if $\ell_c = 1$, then a bound such as
    $|S \cap V_c| \ge (1-\eps)\ell_c - 1$ would be meaningless.
    }
\else
    otherwise.
\fi

    We perform $I$ iterations.
    In the $i$-th iteration,
    we apply \cref{lem:exist-k-paths} to $Y_{i-1}$ (and $P$)
    to obtain a collection $X_i^1,...,X_i^{k(Y_{i-1})}$
    of augmenting or alternating sets.
    We choose one of them,
    \iffull
    $X_i \in \{X_i^1,...,X_i^{k(Y_{i-1})}\}$,
    \else
    $X_i$,
    \fi
    uniformly at random,
    and apply it to obtain a new solution $Y_i = Y_{i-1} \triangle X_i$.
    Finally, we return $S := Y_I$.

\iffull
    All solutions $Y_0, ..., Y_I$ are independent and upper-fair; it remains to verify the guarantees of \cref{thm:main-randomized}.
    We start by noting that
    \begin{equation}
        k(Y_i) = k - i \,. \label{eq:k}
    \end{equation}
    To see this, note that during the algorithm's execution,
    no new color ever becomes undersaturated,
    as
    by \cref{def:paths}, $Y_i$ can have fewer elements than $Y_{i-1}$ in a color $c''$ only if $c''$ was oversaturated in $Y_{i-1}$.
    On the other hand, for exactly one undersaturated color $c'$,
    $Y_i$ has one more element in $c'$ than $Y_{i-1}$.
    Thus we have $k(Y_i) = k(Y_{i-1})-1$ and \eqref{eq:k} follows.
    (Colors $c$ that are neither under- or oversaturated remain such forever.)

    \paragraph{Fairness lower bounds.}
    Building upon the previous paragraph,
    we consider a random process involving colored balls
    that will mirror what is happening in the algorithm.
    Let $U \subseteq [C]$ be the set of colors that are undersaturated in $Y_0$.
    At the beginning, for every $c \in U$,
    we create $|P \cap V_c| - |Y_0 \cap V_c|$ balls of color $c$.
    (So we start with $k$ balls in total.)
    At every iteration $i$
    there is exactly one color $c'$
    (that is undersaturated in $Y_{i-1}$, so $c' \in U$)
    where $|Y_i \cap V_{c'}| = |Y_{i-1} \cap V_{c'}| + 1$;
    we then remove one random ball of color $c'$.
    Then,
    by \cref{def:paths} (since all other colors in $U$ retain their element count),
    we have that the number of balls of every color $c \in U$ is equal to $|P \cap V_c| - |Y_i \cap V_c|$
    (and their total number is $k(Y_i)=k-i$).

    Now we claim that in this process,
    at every iteration a uniformly random ball is removed.
    This is because, by \cref{lem:exist-k-paths},
    for every $c \in U$,
    exactly $|P \cap V_c| - |Y_{i-1} \cap V_c|$ of the $k(Y_{i-1})$ augmenting or alternating sets increase $c$,
    and we choose randomly among these sets.

    It follows that at the end,
    the set of removed $I$ balls
    is distributed uniformly among all subsets of this size.
    Consider a color $c$.
    If $c \not \in U$, then $c$ will not be undersaturated at the end, so $|S \cap V_c| \ge |P \cap V_c| \ge \ell_c$.
    Now fix $c \in U$
    and denote by $B_c$ the number of removed balls of color $c$.
    Conditioning on $I$,
    we have
    \begin{align*}
        \bE[|S \cap V_c|] &= \bE[|Y_0 \cap V_c| + B_c] \\
        &= |Y_0 \cap V_c| +  \frac{I}{k}(|P \cap V_c| - |Y_0 \cap V_c|) \\
        &\ge \frac{I}{k} |P \cap V_c| \\
        &\ge \frac{I}{k} \ell_c
    \end{align*}
    and thus
    $\bE[|S \cap V_c|] = \bE[\bE[|S \cap V_c| \mid I]] \ge \frac{\bE[I]}{k} \ell_c = (1-\eps) \ell_c$ as required.

    \paragraph{Cardinality.}
    Our proof that $\bE[|S|] \ge (1-\eps)|P| = (1-\eps)N$
    is very similar to the proof above for a single color.
    We start with $|P|-|Y_0|$ red balls and $k-(|P|-|Y_0|)$ non-red balls ($k$ in total).
    At every iteration $i$, if an augmenting set was chosen (so that $|Y_i| = |Y_{i-1}|+1$), we remove a red ball, otherwise we remove a non-red ball.
    Suppose that at every iteration $i$ there were exactly $|P|-|Y_{i-1}|$ augmenting sets among the $k-i+1$ sets;
    then the set of balls removed at the end would be distributed uniformly among all subsets of $I$ balls.
    Then, if $B$ denotes the number of removed red balls,
    we would have $\bE[B] = \frac{I}{k}(|P|-|Y_0|)$
    (conditioned on $I$).
    Now, in fact
    at every iteration $i$ there are \emph{at least} $|P|-|Y_{i-1}|$ augmenting sets among the $k-i+1$ sets;
    hence,
    the distribution of $B$
    dominates the above uniform-ball-subset distribution,
    which is formally known as $\Hypergeometric(k, |P|-|Y_0|, I)$.
    In particular,
    this implies that $\bE[B] \ge \frac{I}{k}(|P|-|Y_0|)$.
    We conclude by saying that
    $\bE[|S|] = |Y_0| + \bE[B] \ge |Y_0| + \frac{\bE[I]}{k}(|P|-|Y_0|) \ge (1-\eps)|P|$.

    \paragraph{Cardinality tail bound.}
    Recall that $N=|P|$,
    and that
    for any $\delta>0$ we want to prove that
        $\bP[|S| < (1-\delta) (1-\eps) N] \le \exp(-\Omega_\delta(N))$.
    It is known~\cite{hoeffding1963} that the hypergeometric distribution
    satisfies the same Chernoff-type bounds as the binomial distribution
    (as it corresponds to a sum of samples that are negatively correlated, rather than independent).
    In particular (conditioning on $I$ throughout), we have
    \begin{align*}
        \bP\sqb{B < \rb{1-\frac{\delta}{2}} \mu} &\le \exp\rb{-\frac12 \rb{\frac{\delta}{2}}^2 \mu} \le \exp\rb{-\Omega_\delta(\mu)} \\
        &\text{where } \mu = \bE[\Hypergeometric(k,|P|-|Y_0|,I)] = \frac{I}{k}(|P|-|Y_0|).
    \end{align*}
    If $|Y_0| \ge (1-\delta)(1-\eps)N$,
    then $|S| = |Y_0| + B$ is large enough with probability 1, so we can assume otherwise,
    i.e., that
    $|Y_0| < (1-\delta)(1-\eps)N \le (1-\delta)N$.
    Thus
    \begin{equation} \label{eq:k-is-large}
        k \ge |P|-|Y_0| \ge \delta N > \Omega(1),
    \end{equation}
    so for $N$ large enough we have $\frac{1}{k} < \frac{\delta}{2}(1-\eps)$ and thus
    $\frac{I}{k} \ge \frac{\floor{(1-\eps)k}}{k} \ge \frac{(1-\eps)k-1}{k} = 1-\eps-\frac{1}{k} \ge \rb{1-\frac{\delta}{2}}(1-\eps)$.
    By this and \eqref{eq:k-is-large},
    $\mu = \frac{I}{k}(|P|-|Y_0|) \ge \rb{1-\frac{\delta}{2}}(1-\eps)\delta N \ge \Omega_\delta(N)$,
    so that $\exp\rb{-\Omega_\delta(\mu)} = \exp\rb{-\Omega_\delta(N)}$.
    Finally, if the good event $B \ge \rb{1-\frac{\delta}{2}} \mu$ happens, then
    \[
    B \ge \rb{1-\frac{\delta}{2}} \frac{I}{k} (|P|-|Y_0|)
    \ge \rb{1-\frac{\delta}{2}}^2 (1-\eps) (|P|-|Y_0|)
    \ge \rb{1-\delta} (1-\eps) (|P|-|Y_0|)
    \]
    and thus
    $|S| = |Y_0| + B \ge (1-\delta)(1-\eps)N$.

    \paragraph{Total fairness violation tail bound.}
    Let us first remark that
    the algorithm increases some undersaturated color $c'$ at every iteration,
    so one could think that $\fv(S)$ is small with probability 1.
    However, undersaturation is measured with respect to $P$,
    and we can have $|P \cap V_c| \gg \ell_c$ for some $c$.
    Increasing a color beyond $\ell_c$ elements does not make progress with respect to fairness violation.
    Nevertheless, we can prove a high-concentration bound in terms of $\ell_c$.    
    Recall that
    for any $\delta > 0$
    we want to show that
        $\bP[\fv(S) > (1+\delta) \eps \sum_c \ell_c] \le \exp\rb{-\Omega_\delta\rb{\sum_c \ell_c}}$.
        
    The proof will be similar as above,
    but now the balls, on top of having a color,
    can be \emph{striped} or not.
    Namely, for each $c \in [C]$, we create
    $\max(0, |P \cap V_c| - |Y_0 \cap V_c|)$ balls of color $c$,
    of which
    $\max(0, \ell_c - |Y_0 \cap V_c|)$ many will be \emph{striped}.
    (We have $k$ balls in total, of which $\fv(Y_0)$ are striped.)
    Again, at each iteration, if $c'$ is the color that the algorithm increases,
    we remove a random ball of color $c'$.

    Let $X$ be the number of striped balls removed by the end.
    We then have
    \begin{equation}
        \label{eq:fvS}
        \fv(S) \le \fv(Y_0) - X \,.
    \end{equation}
    This is because whenever we increase some color $c'$ that has fewer than $\ell_{c'}$ elements,
    the fairness violation of the solution decreases by 1,
    but $X$ only accounts for this decrease if we happen to sample a \emph{striped} $c'$-colored ball.
    Since there are only as many striped $c'$-colored balls as there are
    fairness violations of color $c'$,
    at the end we will have removed at least as many of the violations as of the balls.
    
    Now we proceed as for cardinality.
    We have
    \begin{align*}
        \bP\sqb{X < \rb{1-\frac{\delta \eps}{2}} \mu} &\le \exp\rb{-\frac12 \rb{\frac{\delta \eps}{2}}^2 \mu} \le \exp\rb{-\Omega_\delta(\mu)} \\
        &\text{where } \mu = \bE[\Hypergeometric(k,\fv(Y_0),I)] = \frac{I}{k}\fv(Y_0).
    \end{align*}
    If $\fv(Y_0) \le (1+\delta)\eps \sum_c \ell_c$ then we are done,
    so assume otherwise. Then
    \begin{equation} \label{eq:k-is-large-2}
        k \ge \fv(Y_0) > (1+\delta)\eps \sum_c \ell_c > \Omega(1),
    \end{equation}
    so for $\sum_c \ell_c$ large enough we have
    $\frac{1}{k} < \frac{\delta \eps}{2} (1-\eps)$
    and thus
    $\frac{I}{k} \ge 1-\eps-\frac{1}{k} \ge \rb{1-\frac{\delta \eps}{2}}(1-\eps)$.
    By this and \eqref{eq:k-is-large-2},
    $\mu = \frac{I}{k}\fv(Y_0) \ge \rb{1-\frac{\delta \eps}{2}}(1-\eps) (1+\delta)\eps \sum_c \ell_c \ge \Omega(\sum_c \ell_c)$,
    so that
    $\exp(-\Omega_\delta(\mu)) = \exp(-\Omega_\delta(\sum_c \ell_c))$.
    Finally, if the good event $X \ge \rb{1-\frac{\delta \eps}{2}} \mu$ happens, then
    \[
    X \ge \rb{1-\frac{\delta \eps}{2}} \frac{I}{k} \fv(Y_0)
    \ge \rb{1-\frac{\delta \eps}{2}}^2 (1-\eps) \fv(Y_0)
    \ge \rb{1-\delta \eps - \eps} \fv(Y_0)
    \]
    and thus
    \[
    \fv(S) \stackrel{\eqref{eq:fvS}}{\le} \fv(Y_0) - X
    \le \fv(Y_0) - \rb{1-\delta \eps - \eps} \fv(Y_0)
    = (1 + \delta) \eps \fv(Y_0)
    \le (1+\delta) \eps \sum_c \ell_c \,.
    \]

    \paragraph{Objective value.}
    Intuitively, at every iteration $i$,
    we select randomly from among $k-i+1$ disjoint augmenting or alternating sets.
    Even if the newly added elements do not add any $f$-value,
    by submodularity
    \iffull\else (\cref{fact:submod_k}) \fi
    we expect to lose only at most a $1/(k-i+1)$ fraction of the $f$-value of the current solution.
    After $I \approx (1-\eps)k$ iterations we then end up with a telescoping product that simplifies to $\frac{\eps k}{k} f(Y_0)$.

    We now give a formal proof.
    We show by induction on $i$ that 
    \begin{equation} \label{eq:fYi}
        \bE[f(Y_i)] \ge \frac{k-i}{k} f(Y_0) \,.
    \end{equation}
    For $i \ge 1$, condition on $Y_{i-1}$. Then
    \begin{align*}
        \bE[f(Y_i)] &= \bE[f(Y_{i-1} \triangle X_i)] \\
        &= \frac{1}{k(Y_{i-1})} \sum_{j=1}^{k(Y_{i-1})} f(Y_{i-1} \triangle X_i^j) \\
        &\ge \frac{1}{k-i+1} \sum_{j=1}^{k-i+1} f(Y_{i-1} \setminus (Y_{i-1} \cap X_i^j)) \\
        &\ge \frac{k-i}{k-i+1} f(Y_{i-1}) \,,
    \end{align*}
    where the first inequality follows by monotonicity
    and the second inequality is by applying \cref{fact:submod_k} to the set family $Y_{i-1} \cap X_i^1, ..., Y_{i-1} \cap X_i^{k-i+1} \subseteq Y_{i-1}$.
    Now taking expectation over $Y_{i-1}$,
    \begin{align*}
        \bE[f(Y_i)] &= \bE[\bE[f(Y_i) \mid Y_{i-1}]] \\
        &\ge \bE\left[ \frac{k-i}{k-i+1} f(Y_{i-1}) \right] \\
        &\ge \frac{k-i}{k-1+1} \cdot \frac{k-(i-1)}{k} f(Y_0)
    \end{align*}
    where we applied the inductive hypothesis.
    Having \eqref{eq:fY0} and \eqref{eq:fYi}, we can write
    \[ \bE[f(S)] = \bE[\bE[f(Y_I) \mid I]] \ge \bE\left[\frac{k-I}{k} f(Y_0)\right] = \frac{k-(1-\eps)k}{k} f(Y_0) \ge \eps \cdot 0.499 \cdot \OPTMatInt \,. \]
\end{proofof}
\else
    All solutions $Y_0, ..., Y_I$
    are independent and upper-fair.
    The properties of fairness lower bounds intuitively follow because at every iteration one random fairness violation is removed,
    and the number of iterations is $\approx (1-\eps)$ times the initial number of fairness violations $k=k(Y_0)$.
    Since at least $|P|-|Y_{i-1}|$ of the sets at iteration $i$ are augmenting, the claim about the size of the solution follows similarly.
    We can also show tail bounds by invoking Chernoff-Hoeffding style bounds for hypergeometric distributions.
    As for the objective value, we prove that at every step, the expected loss is only a $1/(k-i+1)$ fraction of the current $f$-value, as we select randomly from among $k-i+1$ disjoint augmenting or alternating sets. After $I \approx (1-\eps)k$ iterations we then end up with a telescoping product that simplifies to $\frac{\eps k}{k}f(Y_0)$.
\end{proofsketchof}
\vspace{-0.5cm}
\fi
\iffull

The pseudocode for \Cref{lem:exist-k-paths} can be found in \Cref{alg:lem:exist-k-path} and the pseudocode for \Cref{thm:main-randomized} can be found in \Cref{alg:main-randomized}.

\begin{algorithm}
\caption{Generate Augmenting Paths}
\label{alg:lem:exist-k-path}
\begin{algorithmic}[1]
\REQUIRE Matroid $\mathcal{I}$ on universe $V$ and fairness constraints with colors $[C]$ and lower bound $\ell_c$ and upper bound $u_c$ for each color $c \in [C]$.
\REQUIRE Sets $Y$ and $P$, where $Y$ and $P$ are independent sets with respect to matroids $\mathcal{I}$ and $\mathcal{U}$, and $|Y| \leq |P|$.
\ENSURE A set of augmenting and alternating paths between $Y$ and $P$. 
\STATE Build an initially empty bipartite graph with ``left vertices" being elements of $Y$ and ``right vertices" being elements of $P$
\FOR{each $(y, p) \in Y \times P$}
\STATE Add directed edge $ y \to p$ if $Y + p - y \in \cU$
\STATE Add directed edge $y \leftarrow p$ if $Y + p - y \in \cI$
\ENDFOR
\STATE Define $P$-sinks $T_P := \{ p \in P : Y + p \in \cI \}$ and let $T_P'$ be an arbitrary subset of $T_P$ of size exactly $|P|-|Y|.$
\STATE Find a matching between $Y$ and $P \setminus T_P'$ and drop edges adjacent to $T_P \setminus T_P'$. Call this $M_{\leftarrow}$.
\STATE Start with $M_{\rightarrow} = \emptyset$.
\FOR{every color $c \in [C]$}
\STATE Add $\min(|Y \cap V_c|, |P \cap V_c|)$ edges from $Y \cap V_c$ to $P \cap V_c$ to the matching $M_\to$.
\ENDFOR
\STATE Define the set $S$ of \emph{sources} as all vertices in $P$ that did not get matched in $M_\to$.
\STATE Define the set $T_Y$ of \emph{$Y$-sinks} as all vertices in $Y$ that did not get matched in $M_\to$.
\STATE Let $\mathcal{X}$ be an initially empty collection of paths. 
\WHILE{there is an unused source}
\STATE Start at at unused source (in $P$).
\IF{at a vertex of $P$}
\IF{vertex is a sink (in $T_P$)}
\STATE Stop path and add to collection $\mathcal{X}$.
\ELSE
\STATE Follow the outgoing edge of $M_\leftarrow$.
\ENDIF
\ELSE 
\IF{vertex is a sink (in $T_Y$)}
\STATE Stop path and add to collection $\mathcal{X}$.
\ELSE
\STATE Follow the outgoing edge of $M_\rightarrow$.
\ENDIF
\ENDIF
\ENDWHILE

\FOR{each path $X$ in $\mathcal{X}$}
\STATE Shortcut $X$ along right-to-left directed edges (see proof of \Cref{lem:exist-k-paths} for details).  
\ENDFOR

\RETURN collection of disjoint augmenting / alternating sets $\mathcal{X}$.

\end{algorithmic}
\end{algorithm}

\begin{algorithm}
\caption{Random Augmenting Paths}
\label{alg:main-randomized}
\begin{algorithmic}[1]
\REQUIRE Matroid $\mathcal{I}$ on universe $V$, a submodular function $f: 2^V \rightarrow \mathbb{R}_{\geq 0}$, fairness constraints with colors $[C]$ and lower bound $\ell_c$ and upper bound $u_c$ for each color $c \in [C]$.
\ENSURE Approximately fair and high value set $S$.
\STATE Find a fair independent set $P$ via \Cref{lem:polytime-linear-fmmsm} or any matroid intersection algorithm. 
\STATE Find a high $f$-value independent set $Y_0$ with respect to matroids $\mathcal{I}$ and $\mathcal{U}$ via either local search (to get guarantee \Cref{thm:local_search}) or Greedy (to get guarantee \Cref{thm:mi-greedy}). 
\STATE Let $k(Y)$ be the number of paths returned by imputing the matchings $Y$ and $P$ into \Cref{alg:lem:exist-k-path}.
\STATE Let $I$ be $\lfloor (1-\varepsilon) k(Y_0) \rfloor$ with probability $\lceil  (1-\varepsilon) k(Y_0) \rceil - (1-\varepsilon) k(Y_0)$ and $\lceil  (1-\varepsilon) k(Y_0) \rceil$ otherwise.
\FOR{$i = 1$ to $I$}
    \STATE Compute feasible augmenting/alternating paths $X_{i}^1, \hdots, X_{i}^{k(Y_{i-1})}$ between $Y_{i-1}$ and $P$ as outlined in \Cref{lem:exist-k-paths} (see \Cref{alg:lem:exist-k-path}).
    \STATE Choose one $X_i \in \{X_i^1,...,X_i^{k(Y_{i-1})}\}$ uniformly at random. 
    \STATE Apply it to obtain $Y_i = Y_{i - 1} \Delta X_i$.
\ENDFOR
\RETURN $S := Y_{I}$ 
\end{algorithmic}
\end{algorithm}

\fi

\subsection{Deterministic algorithm}
\label{sec:deterministic}

\iffull
Now we turn to our deterministic result, \cref{thm:main-deterministic}.
We begin by showing a lemma that is an analogue of \cref{lem:exist-k-paths}.
\else
Now we turn to our deterministic result,
\cref{thm:deterministic-informal}.
It is powered by a lemma that is an analogue of \cref{lem:exist-k-paths}.
Their proofs are deferred to the full version (in the supplementary material).
\fi
\begin{lemma} \label{lemma:exist-k-paths-det}
    For any two matroids $\cI_1$, $\cI_2$,
    let $Y, P \in \cI_1 \cap \cI_2$ be two sets in their intersection,
    with $|Y| \le |P|$.
    Then we may find in polynomial time a collection $X_1, ..., X_{|P|-|Y|}$ of disjoint subsets of $Y \cup P$
    such that for each set $X_i$ we have $Y \triangle X_i \in \cI_1 \cap \cI_2$ and $|Y \triangle X_i| = |Y| + 1$.
\end{lemma}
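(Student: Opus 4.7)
The plan is to mirror the proof of \cref{lem:exist-k-paths} with the partition matroid $\cU$ replaced by the general matroid $\cI_2$ and the color-based matching $M_\to$ replaced by an analogous $\cI_2$-exchange matching. Since fairness is no longer present, the retained paths will all be augmenting (we discard size-preserving ones), and the asymmetric shortcutting of the original proof becomes symmetric in both directions here.

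Concretely, assume WLOG $Y \cap P = \emptyset$. Set up the directed matroid intersection exchange graph on $Y \cup P$ with $y \to p$ whenever $Y+p-y \in \cI_2$ and $y \leftarrow p$ whenever $Y+p-y \in \cI_1$. Define $S_P := \{p \in P : Y+p \in \cI_2\}$ and $T_P := \{p \in P : Y+p \in \cI_1\}$; both have size $\ge |P|-|Y|$ by matroid augmentation applied to $\cI_2$ and $\cI_1$ respectively. Build $M_\leftarrow$ exactly as in \cref{lem:exist-k-paths}, using a size-$(|P|-|Y|)$ subset $T_P' \subseteq T_P$, so that $M_\leftarrow$ matches every vertex of $P \setminus T_P$. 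Build $M_\to$ as a matching in the $\cI_2$-exchange subgraph between $Y$ and $P \setminus S_P$ that covers all of $P \setminus S_P$; existence of such a matching follows from a standard base-exchange / Hall-type argument, using that every $p \in P \setminus S_P$ lies in the $\cI_2$-closure of $Y$. Walking from each source in $S_P$ by alternating $M_\leftarrow$ and $M_\to$ arcs produces $|S_P|$ vertex-disjoint simple directed paths; a counting argument (at most $|Y| - |M_\to| \le |S_P| - (|P|-|Y|)$ paths end at unmatched $Y$-vertices) shows that at least $|P|-|Y|$ of them terminate at $P$-sinks in $T_P$. A source that is also a sink produces a trivial length-$0$ path $X_i = \{p\}$, which is valid since $Y + p \in \cI_1 \cap \cI_2$ by construction.

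For each retained augmenting path, iteratively shortcut to chord-free in \emph{both} directions by removing any chord $y_j \leftarrow p_i$ or $y_i \to p_{j+1}$ with $j > i$; since shortcutting only removes vertices, vertex-disjointness is preserved. Crucially, every intermediate $p_i$ (for $i \ge 2$) of an augmenting path is a target of $M_\to$ and hence lies in $P \setminus S_P$, so $p_i \notin S_P$; together with $p_1 \in S_P$, $p_{t+1} \in T_P$, and chord-freeness in both directions, this is exactly what is needed for the symmetric mirror-image of the matroid-augmentation argument used at the end of the proof of \cref{lem:exist-k-paths}: \cref{lem:unique-perfect-matching} applied to the $\cI_2$-exchange graph of $Y$ and $Y \triangle (X_i - p_1)$ yields $\cI_2$-independence, while the original argument applied to the $\cI_1$-exchange graph yields $\cI_1$-independence, giving $Y \triangle X_i \in \cI_1 \cap \cI_2$ with $|Y \triangle X_i| = |Y|+1$. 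The main obstacle, absent in \cref{lem:exist-k-paths} because there $\cU$-independence followed automatically from the color structure of $M_\to$, is precisely this symmetric $\cI_2$-independence argument together with the existence of the $M_\to$ matching covering all of $P \setminus S_P$; both resolve naturally via the base-exchange machinery outlined above.
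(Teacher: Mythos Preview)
Your proposal is correct and follows essentially the same route as the paper's proof: build the matroid intersection exchange graph, construct two matchings $M_\leftarrow$ (from $\cI_1$) and $M_\to$ (from $\cI_2$), decompose $M_\leftarrow \cup M_\to$ into source--sink paths, shortcut symmetrically in both directions, and then run the unique-perfect-matching argument (\cref{lem:unique-perfect-matching}) once for each matroid.

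There is one genuine technical difference worth noting. The paper takes subsets $S', T'$ of size exactly $|P|-|Y|$ and uses \emph{perfect} matchings between $Y$ and $P\setminus S'$ (resp.\ $P\setminus T'$); this yields exactly $|P|-|Y|$ paths, but internal $P$-vertices may land in $S\setminus S'$ or $T\setminus T'$, so the paper adds an explicit \emph{truncation} step (``if there is an internal vertex that is in $S$, truncate the path so that it begins at that vertex,'' and likewise for $T$). Your construction instead drops edges so that $M_\to$ matches exactly $P\setminus S_P$ and $M_\leftarrow$ matches exactly $P\setminus T_P$; this means every intermediate $P$-vertex of a walk is automatically outside $S_P \cup T_P$, and the truncation step is unnecessary. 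The price you pay is that you produce $|S_P| \ge |P|-|Y|$ paths, some of which terminate at unmatched $Y$-vertices and must be discarded --- but your counting argument handles this cleanly. Both variants are valid; yours is arguably a mild simplification. The existence of your $M_\to$ covering $P\setminus S_P$ follows immediately from \cref{lem:perfect-matching} applied to $Y$ and $P\setminus S_P'$ for any $S_P' \subseteq S_P$ of size $|P|-|Y|$, then restricting --- you may want to say this rather than appeal to a ``Hall-type argument.''
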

\iffull
\begin{proof}
    We proceed similarly as in the proof of \cref{lem:exist-k-paths}.
    We consider the matroid intersection exchange graph for $Y$ and $P$ with respect to $\cI_1$ and $\cI_2$, defined as in that proof.
    Define $T = \{ p \in P : Y+p \in \cI_1\}$
    and $S = \{ p \in P : Y+p \in \cI_2\}$.
    We have $|T|, |S| \ge |P|-|Y|$.
    Let $T'$, $S'$ be arbitrary subsets of $T$, $S$
    respectively, both of size exactly $|P|-|Y|$.
    We invoke \cref{lem:perfect-matching} on the exchange graph for $Y$ and $P \setminus T'$, obtaining $M_\leftarrow$
    as a perfect matching between $Y$ and $P \setminus T'$.
    Similarly, we obtain $M_\to$
    as a perfect matching between $Y$ and $P \setminus S'$.

    Now we can construct the $|P|-|Y|$ paths as in the proof of \cref{lem:exist-k-paths}, starting from sources (set $S'$) and proceeding in the only possible way until we reach a sink (vertex in $T'$).
    (Alternatively, we can note that $M_\leftarrow \cup M_\to$ is a circulation for demands -- i.e., outflow minus inflow -- $+1$ on sources and $-1$ on sinks, and take the paths from its cycle-path decomposition.)
    All paths start and end in $P$.

    Next, we shortcut the paths.
    Here, we replace subpaths with chords
    in both directions, rather than only in one direction
    as in the proof of \cref{lem:exist-k-paths}.
    Moreover, if there is an internal vertex that is in $S$, we need to truncate the path so that it begins at that vertex;
    and similarly, if there is an internal vertex that is in $T$,
    we truncate the path so that it ends at that vertex.
    These operations only shrink the vertex sets of the paths,
    thus they preserve their vertex-disjointness.
    
    We end once the path $X_i$ is from $S$ to $T$ via $(Y \cup P) \setminus (S \cup T)$ and contains no chords.
    The same argument as in the proof of \cref{lem:exist-k-paths} then shows that $X_i$ is an augmenting path,
    i.e., that $Y \triangle X_i \in \cI_1 \cap \cI_2$.
\end{proof}

Now we can state our deterministic algorithm for any two matroids $\cI_1$ and $\cI_2$.

\begin{theorem} \label{thm:main-deterministic}
    There is a deterministic polynomial-time algorithm for
    the problem of maximizing a monotone submodular function
    subject to a matroid intersection constraint,
    parametrized by $\eps \in (0,1)$,
    that outputs a set $S \in \cI_1 \cap \cI_2$
    such that
    \begin{itemize}
        \item $|S| > (1-\eps)N - 1$
        \item $f(S) \ge 0.499 \cdot \eps \cdot \OPTMatInt$
    \end{itemize}
    where $N$ is the maximum size of a set in $\cI_1 \cap \cI_2$,
    and $\OPTMatInt$ is the maximum $f$-value of a set in $\cI_1 \cap \cI_2$.
\end{theorem}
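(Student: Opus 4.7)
The plan is to mirror the structure of the proof of \Cref{thm:main-randomized} with two key modifications: (i) since we no longer have fairness constraints, we can directly use \Cref{lemma:exist-k-paths-det}, which produces only augmenting sets; and (ii) instead of selecting one set uniformly at random, we deterministically pick the one that preserves the most $f$-value, and invoke \Cref{fact:submod_k} to bound the loss.

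I would first compute a set $P \in \cI_1 \cap \cI_2$ of maximum cardinality $N$ (standard polynomial-time matroid intersection), and a set $Y_0 \in \cI_1 \cap \cI_2$ with $f(Y_0) \ge 0.499 \cdot \OPTMatInt$ using the local search algorithm of \Cref{thm:local_search} with $\delta = 10^{-3}$. Then I would iterate: at step $i$, apply \Cref{lemma:exist-k-paths-det} to $Y_{i-1}$ and $P$ to obtain $k_i = |P|-|Y_{i-1}|$ disjoint sets $X_i^1, \ldots, X_i^{k_i}$, each augmenting $Y_{i-1}$ (so $Y_{i-1} \triangle X_i^j \in \cI_1 \cap \cI_2$ and $|Y_{i-1} \triangle X_i^j| = |Y_{i-1}|+1$). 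I would then pick $X_i := X_i^{j^*}$ that maximizes $f(Y_{i-1} \triangle X_i^j)$, and set $Y_i = Y_{i-1} \triangle X_i$.

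Applying \Cref{fact:submod_k} to the disjoint family $Y_{i-1} \cap X_i^j \subseteq Y_{i-1}$ (for $j=1,\ldots,k_i$), combined with monotonicity ($f(Y_{i-1} \triangle X_i^j) \ge f(Y_{i-1} \setminus X_i^j)$), gives $f(Y_i) \ge \tfrac{k_i - 1}{k_i} f(Y_{i-1})$. Since $k_i = |P|-|Y_0|-(i-1)$, iterating this bound telescopes to
\[ f(Y_I) \ge \frac{|P|-|Y_0|-I}{|P|-|Y_0|} f(Y_0) = \frac{|P|-|Y_I|}{|P|-|Y_0|} f(Y_0). \]
This is the identity that governs the trade-off: to guarantee $f(Y_I) \ge \eps f(Y_0)$ we need $|Y_I| \le (1-\eps)|P| + \eps|Y_0|$, while to guarantee the size bound we need $|Y_I| > (1-\eps)|P|-1$. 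The interval $((1-\eps)|P|-1, (1-\eps)|P|+\eps|Y_0|]$ has length $\eps|Y_0|+1 \ge 1$, so it always contains an integer, and since $|P| \ge |Y_0|$ that integer is at least $|Y_0|$. I would let $I$ be the difference between any such integer and $|Y_0|$ (taking $I=0$ if $|Y_0|$ itself already lies in the interval). The final output is $S := Y_I \in \cI_1 \cap \cI_2$.

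The main obstacle is the edge case analysis around the choice of $I$: one must verify that a suitable integer exists in the interval and that taking the smallest such integer yields $|S| > (1-\eps)N-1$ while keeping $\tfrac{|P|-|S|}{|P|-|Y_0|} \ge \eps$ (so that $f(S) \ge \eps \cdot f(Y_0) \ge 0.499 \cdot \eps \cdot \OPTMatInt$). The $-1$ slack in the size bound is precisely what absorbs the integrality of $I$, and is why the theorem statement includes ``minus one.'' Polynomial running time is immediate since each iteration invokes \Cref{lemma:exist-k-paths-det} and evaluates $f$ a polynomial number of times.
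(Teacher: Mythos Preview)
Your proposal is correct and follows essentially the same approach as the paper: compute $P$ and $Y_0$, iteratively apply \Cref{lemma:exist-k-paths-det} choosing the augmenting set that maximizes $f$, and bound the loss via \Cref{fact:submod_k} to obtain the telescoping product. The only cosmetic difference is that the paper sets $I := \lfloor (1-\eps)(|P|-|Y_0|) \rfloor$ directly, which immediately yields both $|S| = |Y_0| + I > (1-\eps)N - 1$ and $I \le (1-\eps)(|P|-|Y_0|)$ (hence $f(S) \ge \eps f(Y_0)$), sidestepping your interval/edge-case discussion.
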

\begin{proof}
    As in the algorithm of \cref{thm:main-randomized},
    we start by computing a maximum-cardinality set $P$ in the matroid intersection,
    as well as a high-value set $Y$ in the matroid intersection.
    We have $|P| = N$ and $f(Y_0) \ge 0.499 \cdot \OPTMatInt$.

    We then perform $I := \floor{(1-\eps)(|P|-|Y_0|)}$ iterations.
    In the $i$-th iteration,
    we apply \cref{lemma:exist-k-paths-det} to $Y_{i-1}$ (and $P$)
    to obtain a collection $X_i^1,...,X_i^{|P|-|Y_{i-1}|}$
    of sets.
    We choose the one of them, $X_i \in \{X_i^1,...,X_i^{|P|-|Y_{i-1}|}\}$, that,
    when used to obtain a new solution $Y_i = Y_{i-1} \triangle X_i$,
    maximizes $f(Y_i)$.
    Finally, we return $S := Y_I$.

    All solutions $Y_0, ..., Y_I$ are in the matroid intersection,
    and they grow in size by 1 per step.
    Thus we have
    $|S| = |Y_0| + \floor{(1-\eps)(|P| - |Y_0|)} > |Y_0| + (1-\eps)(|P| - |Y_0|) - 1 \ge (1-\eps)|P| - 1 = (1-\eps)N - 1$.
    
    The proof for the objective value guarantee is the same as in \cref{thm:main-randomized};
    at each step, since an average set preserves a $\frac{|P|-|Y_0|-i}{|P|-|Y_0|-i+1}$ fraction of the value, so does the best set.
    We conclude by saying that since $I \le (1-\eps)(|P|-|Y_0|)$,
    \[f(S) = f(Y_I) 
    \ge \frac{|P|-|Y_0|-(1-\eps)(|P|-|Y_0|)}{|P|-|Y_0|} f(Y_0) \ge \eps \cdot 0.499 \cdot \OPTMatInt \,.
    \]
\end{proof}
\fi

\section{Experimental evaluation}
\label{sec:experiments}

\iffull
\begin{figure}[h!]
  \centering
  \includegraphics[width=\textwidth]{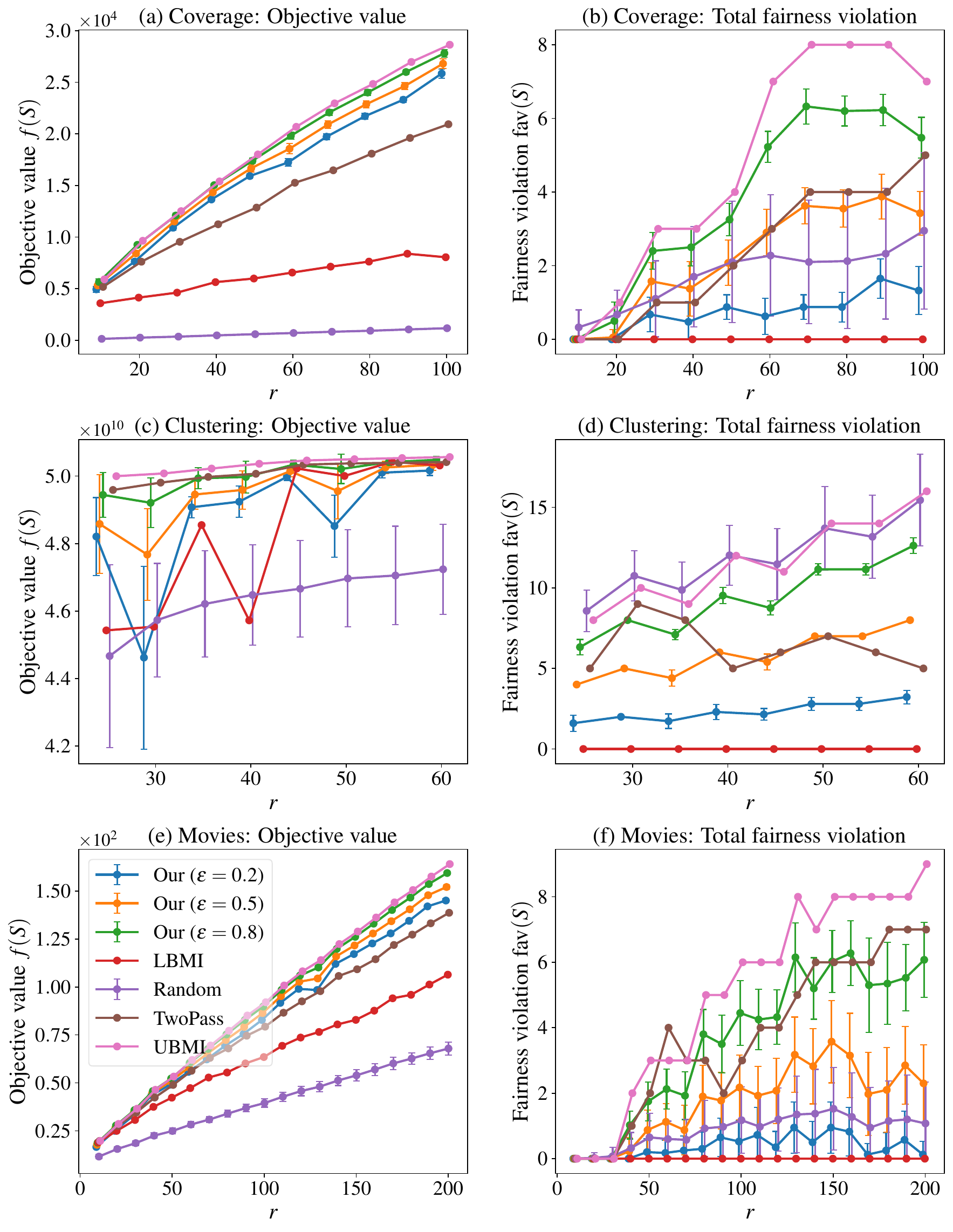}
  \caption{Our experimental results. Each row corresponds to one experiment; the left plot shows the objective value of each algorithm for a range of solution scale factors $r$, and the right plot shows fairness violations. For randomized algorithms we report averages, with error bars that correspond to sample standard deviation.}
  \label{fig:results}
\end{figure}
\begin{figure}[h!]
  \centering
  \includegraphics[width=\textwidth]{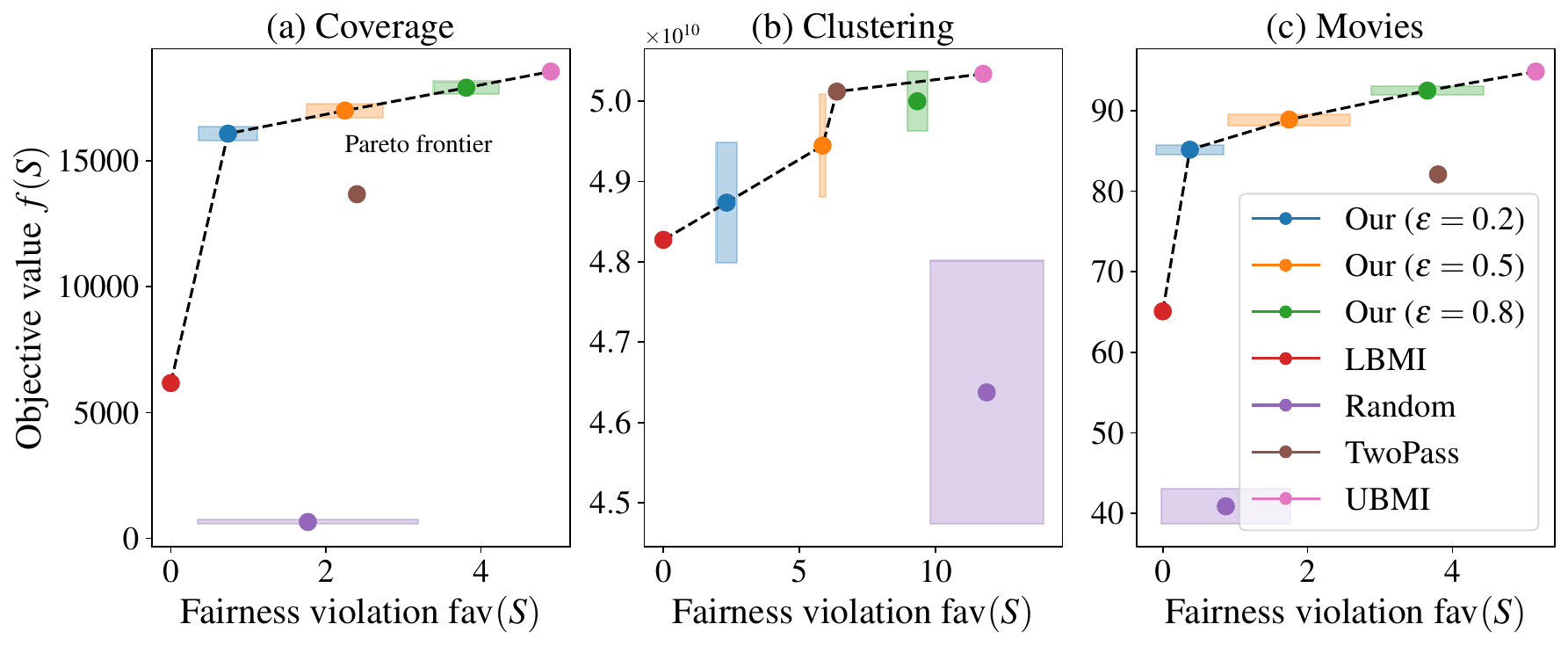}
  \caption{
    For each experiment and algorithm we take the average objective value and fairness violation over all $r$-values, and plot this as a single point. For randomized algorithms, 
    the colored rectangles correspond to standard deviations.
    The dashed line corresponds to the Pareto frontier of the trade-off between objective value and fairness violation.
  }
  \label{fig:scatter}
\end{figure}
\else
\begin{figure}[t]
  \centering
  \includegraphics[width=\textwidth]{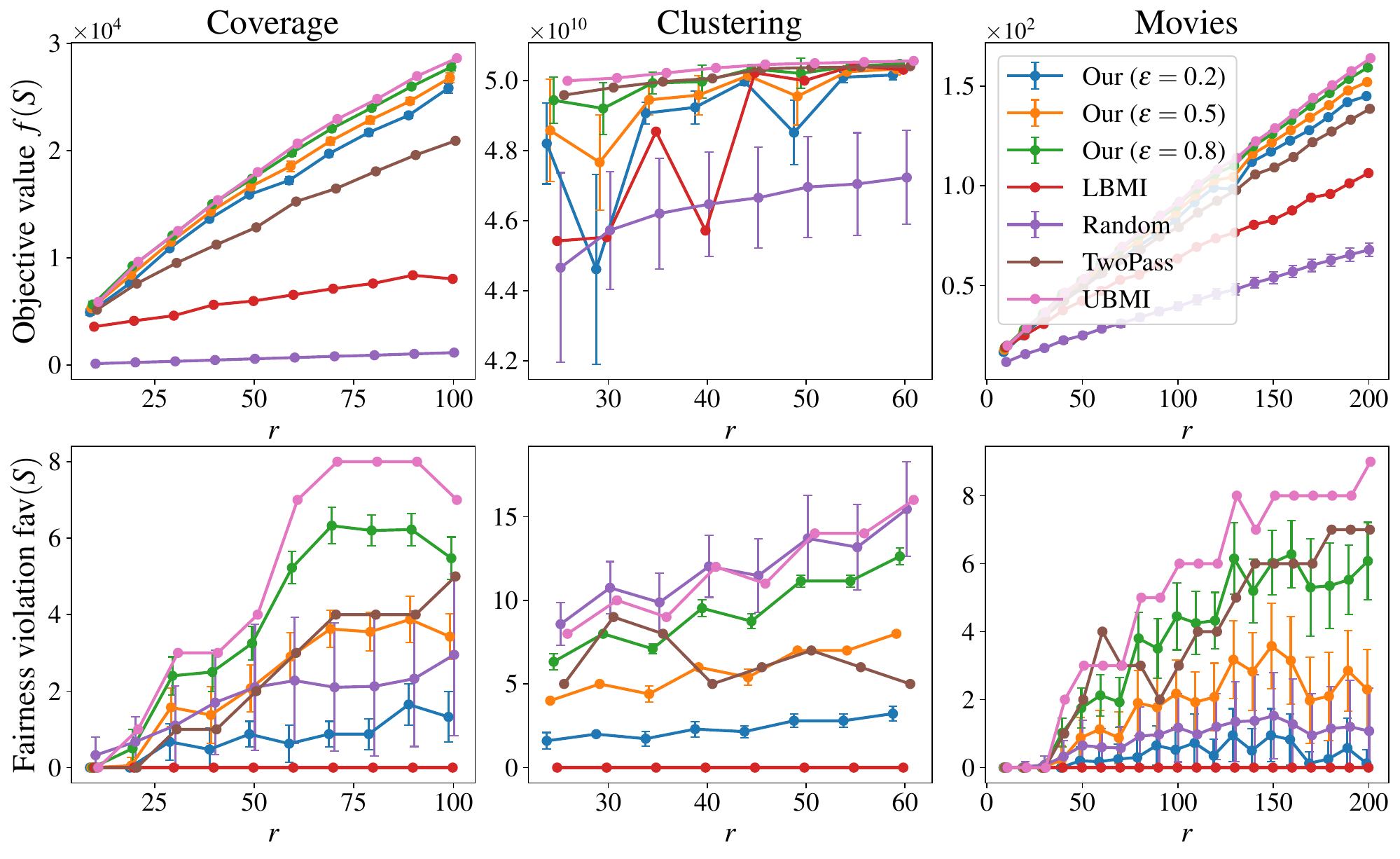}
  \caption{Our experimental results. Each column corresponds to one experiment; the top plot shows the objective value of each algorithm for a range of solution scale factors $r$, and the bottom plot shows fairness violations. For randomized algorithms we report averages, with error bars that correspond to sample standard deviation.
  \label{fig:results}
  }
  \includegraphics[width=\textwidth]{scatter_grid.pdf}
  \caption{
    For each experiment and algorithm we take the average objective value and fairness violation over all $r$-values, and plot this as a single point. For randomized algorithms, 
    the colored rectangles correspond to standard deviations.
    The dashed line corresponds to the Pareto frontier of the trade-off between objective value and fairness violation.
    \label{fig:scatter}
  }
  \vspace{-0.6cm}
\end{figure}
\fi

We evaluate the performance of our algorithms
empirically
against prior work and natural baselines
closely following the experimental setup of prior work~\cite{HalabiMNTT20,el2023fairness},
on a suite of benchmarks
that are standard in the field:
graph coverage, clustering, and recommender systems,
under different fairness and matroid constraint settings.
Our metrics are the submodular objective value $f(S)$
and total fairness violation $\fv(S)$.
All of the considered algorithms return sets that are independent and upper-fair, so the measured fairness violations are all with respect to the lower bounds. LLMs were used to assist in coding. 
\iffull
All three benchmarks use a partition matroid.  

\fi
We compare the following algorithms:
\newcommand{\LBMI}{\textsc{LBMI}\xspace}
\newcommand{\UBMI}{\textsc{UBMI}\xspace}
\newcommand{\Our}{\textsc{Our}\xspace}
\newcommand{\Random}{\textsc{Random}\xspace}
\newcommand{\TwoPass}{\textsc{TwoPass}\xspace}
\iffull
\begin{itemize}
\else
\begin{itemize}[topsep=0pt,itemsep=0pt,leftmargin=*]
\fi
    \item $\Our(\eps)$ -- our algorithm of \cref{thm:main-randomized}, for a range of settings of $\eps \in \{0.2,0.5,0.8\}$. To compute a high-value solution $Y$, we run the natural greedy algorithm, which obtains a $1/3$-approximation (\cref{thm:mi-greedy}),
    as the local search algorithm of \iffull \cref{thm:local_search}, while polynomial-time, is \else \cref{thm:local_search} is \fi impractical. The large fair set $P$ is obtained via augmenting paths, ignoring $f$.

    \item $\TwoPass$ -- the algorithm of \cite{el2023fairness} (\cref{thm:streaming}). Since it was originally developed for the streaming setting, to get a fair comparison we simplify away the parts (namely the first pass) whose purpose was ensuring low memory usage. The first step of the algorithm obtains a fair set via augmenting paths (ignoring $f$). This is then divided in two, and each half is extended to an independent and upper-fair solution using a matroid intersection subroutine. For this we employ the greedy algorithm (the original implementation of \cite{el2023fairness} used a swapping algorithm to ensure low memory and linear runtime, but it obtains inferior values).

    \item \LBMI (Lower Bound Matroid Intersection) -- an algorithm that always returns a fair set, with no theoretical guarantee on the
    \iffull
    value
    but with reasonably good value in practice (similar in spirit to \textsc{Greedy-Fair-Streaming} from~\cite{el2023fairness}).
    \else
    value.
    \fi
    It starts by building a fair set via augmenting paths, ignoring $f$, and then extends to a maximal solution using the greedy algorithm.

    \item \UBMI (Upper Bound Matroid Intersection) -- an algorithm that ignores lower bound constraints and just solves the matroid intersection problem for $\cI$ and $\cU$ (similar in spirit to \textsc{Matroid-Intersection} from~\cite{el2023fairness}). Also here we use the greedy algorithm.

    \item $\Random$ -- an algorithm that randomly shuffles the universe and then adds each element if this keeps the solution independent and upper-fair.

\end{itemize}

For a fair comparison of the main underlying ideas,
we made sure that the compared algorithms, particularly \Our and \TwoPass,
use the same subroutines for similar tasks;
the implementations could likely
benefit from heuristically taking $f$ into account rather than ignoring $f$ when building large fair sets,
or from some local-search based postprocessing of the final solution.
We do not compare to the algorithm of~\cite{el2024} (\cref{thm:inexp}) as solving the multilinear extension makes it impractical.
\iffull

\fi
We repeat the randomized algorithms 40 times.
All experiments can be run on commodity hardware (CPU only, single-threaded; we do not report runtimes) and take several hours to finish.

The code for the paper is available at \url{https://github.com/dj3500/fair-matroid-submodular-neurips2025}.


We outline the experimental scenarios below.
In each experiment we vary a solution size scaling factor $r$, which roughly corresponds to the rank of the matroid $\cI$.
\iffull
We select fairness bounds $\ell_c, u_c$ to ensure feasibility, requiring that each color group $V_c$ is proportionally represented in the solution set $S$ -- either matching its share in the dataset (coverage, movies) or ensuring similar group sizes (clustering).
Results are reported in \cref{fig:results,fig:scatter} and discussed in \cref{sec:results}.
\fi

\paragraph{Computational complexity.}
We start with the complexity of the general randomized algorithm of \Cref{thm:main-randomized}. Firstly, the runtime of constructing $P$ (a maximum-cardinality fair and independent set) via augmenting paths is $O(N^{1.5}|V|)$ (by \cite{Schrijver}, Chapter 41.2 Notes). To construct $Y$ (an upper-fair and independent set of high $f$-value), we expend $O(N|V|)$ time using the greedy algorithm.

Next, at each of the $I$ iterations, we must (1) recompute the exchange graph between $Y_i$ and $P$, (2) find $M_{\leftarrow}$ and $M_{\rightarrow}$ as the subgraph of interest, (3) decompose $M_{\leftarrow} \cup M_{\rightarrow}$ into paths, and (4) shortcut these paths. Step (1) takes $O(N^2)$ time, since we query if a directed edge exists between $y$ and $p$ for all $y \in Y$ and $p \in P$. Finding perfect matchings in step (2) takes at most $O(N^{3})$ time (in a practical implementation we could use the Hopcroft-Karp algorithm). Decomposing the resulting subgraph into paths takes at most $O(N)$ time. And lastly, shortcutting the paths again takes at most $O(N^2)$ time. Since $I$ can be $\Theta(N)$, the total runtime is at most $O(N^{4})$. 

A more efficient implementation is possible if $\cI$ is a partition matroid.
The intersection of two partition matroids can be naturally interpreted as a bipartite multigraph (the colors, i.e., parts of $\mathcal{U}$ are one side, the parts of $\mathcal{I}$ are the other side, and an element corresponds to an edge between the two parts it belongs to). In this case, we may look at the following exchange graph: direct the edges of $Y$ from left to right, and the edges of $P$ from right to left. This directed graph may be decomposed into paths. These paths are \textit{simultaneously feasible}, and so we do not need to recompute an exchange graph at every step (or shortcut). Since there are $O(N)$ edges, the runtime to decompose this directed graph is $O(N)$. Over the $I$ iterations, we have a total runtime of at most $O(N^2)$.



\iffull
\subsection{Graph coverage}
\else
\noindent\textbf{Graph coverage.}
\fi
We use the Pokec social network~\citep{snapnets}.
Given a digraph $G = (V, E)$ of users and their friendships, we select a subset $S \subseteq V$ to maximize coverage, defined by $f(S) = \left|\bigcup_{v \in S} N(v)\right|$, where $N(v)$ is the neighborhood of $v$. User profiles include age, gender, height, and weight. We impose a partition matroid on body mass index (BMI). Profiles missing height or weight or with implausible data are removed, yielding a graph with 582,289 nodes and 5,834,695 edges. Users are partitioned into four BMI categories (underweight, normal, overweight, obese), with upper bounds $\ceil{ \frac{|V_i|}{|V|} r}$ for each group $V_i$. We also enforce fairness by age, with 7 groups: $[1,10], [11,17], [18,25], [26,35], [36,45], [46+],$ no age. We set $\ell_c = \floor{0.9 \frac{|V_c|}{|V|} r}$ and $u_c = \ceil{1.5 \frac{|V_c|}{|V|} r}$. We use $r$ from $10$ to $200$.

\iffull
\subsection{Exemplar-based clustering}
\else
\noindent\textbf{Exemplar-based clustering.}
\fi
We use a dataset of 4521 phone calls from a Portuguese bank marketing campaign~\citep{MoroCR14}. The goal is to select a representative subset $S \subseteq V$ for service quality assessment. Each record $e \in V$ is represented as $x_e \in \bR^7$ using 7 numeric features, including age and account balance. We impose a partition matroid on account balance, with 5 groups: $(-\infty,0), [0,2000), [2000,4000), [4000,6000), [6000,\infty)$. Each group $V_i$ has upper bound $r/5$. Fairness is enforced by age, with 6 groups: $[0,29], [30,39], [40,49], [50,59], [60,69], [70+]$, and bounds $\ell_c = 0.1 r + 2$, $u_c = 0.4 r$ for each $c$. We maximize the monotone submodular function~\citep{krause2010budgeted}:
$
    f(S) = \sum_{e' \in V} \big(d(e',0) - \min_{e \in S \cup \{0\}} d(e',e) \big)
$
where $d(e',e) = \|x_{e'} - x_e\|_2^2$ and $x_0$ is the origin. We use $r$ from $30$ to $60$.

\iffull
\subsection{Recommender system}
\else
\noindent\textbf{Recommender system.}
\fi
We simulate a movie recommendation system using the Movielens 1M dataset~\citep{harper2016movielens}, with about one million ratings for 3900 movies by 6040 users. As in prior work~\citep{mitrovic2017streaming,norouzi2018beyond,HalabiMNTT20,el2023fairness}, we compute a low-rank completion of the user-movie matrix~\citep{troyanskaya2001missing}, yielding $w_u \in \bR^{20}$ for each user $u$ and $v_m \in \bR^{20}$ for each movie $m$. The product $w_u^\top v_m$ estimates user $u$'s rating for movie $m$. For user $u$, the monotone submodular utility for a set $S$ of movies is
$ f(S) =
\alpha \cdot \sum_{m' \in M} \max\rb{\max_{m \in S} \rb{v_m^\top v_{m'}}, 0} + (1-\alpha) \cdot \sum_{m \in S} w_u^\top v_m ,
$
with parameter $\alpha=0.85$ balancing coverage and personalized user score. We enforce proportional representation of movies by release date using a partition matroid with 9 decade groups (1911--2000), with upper bounds $\ceil{1.2 \frac{|V_d|}{|V|} r}$ for each decade $V_d$. Movies are also partitioned into 18 genres $c$ (colors), with fairness bounds $\ell_c = \floor{0.8 \frac{|V_c|}{|V|} r}$ and $u_c = \ceil{1.4 \frac{|V_c|}{|V|} r}$. We use $r$ from $10$ to $200$.

\subsection{Results and discussion}
\label{sec:results}

Our results are depicted in \cref{fig:results,fig:scatter}.
Similarly as prior work,
we observe that enforcing fairness does come at
some
cost in the utility value,
and that the utility values of the algorithms are much better in practice than the theoretical bounds guarantee.

In all three experiments, our algorithms produce solutions whose value is relatively competitive with \UBMI, which completely ignores the lower bound constraints
and accordingly has the highest fairness violations.
In two of the three scenarios (coverage and movies),
all \Our algorithms produce a higher $f$-value than all the other baselines (\Random, \LBMI, and \TwoPass);
in particular, \TwoPass is dominated by both $\Our(0.2)$ and $\Our(0.5)$ with respect to both metrics.
For clustering the situation is somewhat unclear, but \TwoPass generally does better.
In terms of violation of the lower bound fairness constraints, our different settings of $\eps$, as expected, provide a smooth tradeoff.
The baseline that guarantees no fairness violations, \LBMI, does relatively poorly in terms of $f$-value.

This tunability of $\eps$ is a key strength of our approach, allowing users to select an operating point that best matches their specific requirements for the balance between utility and fairness.

\section{Conclusion, limitations, broader impact, and future work}
\label{sec:conclusion}

In this work we gave an improved algorithm for FMMSM
which, for any $\eps > 0$,
returns an approximate solution
that satisfies an expected $(1-\eps)$ fraction of each fairness lower bound
while satisfying the matroid constraint and the fairness upper bound constraints exactly;
the returned solution is also large in size
and enjoys high-concentration guarantees.
\iffull

\fi
Recent studies have shown that automated algorithms used in decision-making can introduce bias and discrimination.
We make progress towards
mitigating such effects in problems that can be formulated as submodular maximization under a matroid constraint,
which are relevant to a range of applications
such as forming representative committees or curating content for news feeds.
We show the strong performance of our algorithm empirically on several real-world tasks.
As in prior work, we observe
\iffull that there is indeed \fi
a balance between fairness and utility value;
however, this ``price of fairness'' should not be interpreted as fairness leading to inferior outcomes,
but rather as a trade-off between two valuable metrics.
The parametric nature of our algorithm (the tunable $\eps$ parameter)
provides a new tool to help in navigating  this balance.

Our work leaves open the exciting question of the approximability of FMMSM
(without violations of fairness constraints)
and MSPM.
Is there a constant-factor approximation algorithm for MSPM?
Or is there a superconstant hardness of approximation for FMMSM?
(As remarked in~\cite{el2024},
the latter result would give a negative answer to a fundamental question posed by Vondrák~\cite{VondrakTalk}.)
We also do not consider non-monotone objective functions or the streaming setting in this work.
\iffull

\fi
Finally, it is important to note that the fairness notion we employ, though standard and general,
does not capture some notions of fairness considered in the literature (see e.g.~\cite{chouldechova2018frontiers, tsang2019}).
No universal definition of fairness exists;
the choice of which definition to apply
is application-dependent and an active area of research.

\printbibliography

\iffull\else
\newpage

\appendix

\section{Supplementary Material}
We provide a full version of the paper
with all omitted content, skipped proofs etc.~together with the supplementary material.
\fi


\newpage
\section*{NeurIPS Paper Checklist}

\begin{enumerate}

\item {\bf Claims}
    \item[] Question: Do the main claims made in the abstract and introduction accurately reflect the paper's contributions and scope?
    \item[] Answer: \answerYes{} 
    \item[] Guidelines:
    \begin{itemize}
        \item The answer NA means that the abstract and introduction do not include the claims made in the paper.
        \item The abstract and/or introduction should clearly state the claims made, including the contributions made in the paper and important assumptions and limitations. A No or NA answer to this question will not be perceived well by the reviewers. 
        \item The claims made should match theoretical and experimental results, and reflect how much the results can be expected to generalize to other settings. 
        \item It is fine to include aspirational goals as motivation as long as it is clear that these goals are not attained by the paper. 
    \end{itemize}

\item {\bf Limitations}
    \item[] Question: Does the paper discuss the limitations of the work performed by the authors?
    \item[] Answer: \answerYes{} 
    \item[] Justification: Yes, particularly in \cref{sec:conclusion}.
    \item[] Guidelines:
    \begin{itemize}
        \item The answer NA means that the paper has no limitation while the answer No means that the paper has limitations, but those are not discussed in the paper. 
        \item The authors are encouraged to create a separate "Limitations" section in their paper.
        \item The paper should point out any strong assumptions and how robust the results are to violations of these assumptions (e.g., independence assumptions, noiseless settings, model well-specification, asymptotic approximations only holding locally). The authors should reflect on how these assumptions might be violated in practice and what the implications would be.
        \item The authors should reflect on the scope of the claims made, e.g., if the approach was only tested on a few datasets or with a few runs. In general, empirical results often depend on implicit assumptions, which should be articulated.
        \item The authors should reflect on the factors that influence the performance of the approach. For example, a facial recognition algorithm may perform poorly when image resolution is low or images are taken in low lighting. Or a speech-to-text system might not be used reliably to provide closed captions for online lectures because it fails to handle technical jargon.
        \item The authors should discuss the computational efficiency of the proposed algorithms and how they scale with dataset size.
        \item If applicable, the authors should discuss possible limitations of their approach to address problems of privacy and fairness.
        \item While the authors might fear that complete honesty about limitations might be used by reviewers as grounds for rejection, a worse outcome might be that reviewers discover limitations that aren't acknowledged in the paper. The authors should use their best judgment and recognize that individual actions in favor of transparency play an important role in developing norms that preserve the integrity of the community. Reviewers will be specifically instructed to not penalize honesty concerning limitations.
    \end{itemize}

\item {\bf Theory assumptions and proofs}
    \item[] Question: For each theoretical result, does the paper provide the full set of assumptions and a complete (and correct) proof?
    \item[] Answer: \answerYes{} 
    \item[] Guidelines:
    \begin{itemize}
        \item The answer NA means that the paper does not include theoretical results. 
        \item All the theorems, formulas, and proofs in the paper should be numbered and cross-referenced.
        \item All assumptions should be clearly stated or referenced in the statement of any theorems.
        \item The proofs can either appear in the main paper or the supplemental material, but if they appear in the supplemental material, the authors are encouraged to provide a short proof sketch to provide intuition. 
        \item Inversely, any informal proof provided in the core of the paper should be complemented by formal proofs provided in appendix or supplemental material.
        \item Theorems and Lemmas that the proof relies upon should be properly referenced. 
    \end{itemize}

    \item {\bf Experimental result reproducibility}
    \item[] Question: Does the paper fully disclose all the information needed to reproduce the main experimental results of the paper to the extent that it affects the main claims and/or conclusions of the paper (regardless of whether the code and data are provided or not)?
    \item[] Answer: \answerYes{} 
    \item[] Justification: \newcommand{\weprovidecode}{Yes, we provide the full source code in the supplementary material, zipped together with all relevant datasets for ease of execution. We will also open-source release the code together with the camera-ready version.}
    \weprovidecode
    \item[] Guidelines:
    \begin{itemize}
        \item The answer NA means that the paper does not include experiments.
        \item If the paper includes experiments, a No answer to this question will not be perceived well by the reviewers: Making the paper reproducible is important, regardless of whether the code and data are provided or not.
        \item If the contribution is a dataset and/or model, the authors should describe the steps taken to make their results reproducible or verifiable. 
        \item Depending on the contribution, reproducibility can be accomplished in various ways. For example, if the contribution is a novel architecture, describing the architecture fully might suffice, or if the contribution is a specific model and empirical evaluation, it may be necessary to either make it possible for others to replicate the model with the same dataset, or provide access to the model. In general. releasing code and data is often one good way to accomplish this, but reproducibility can also be provided via detailed instructions for how to replicate the results, access to a hosted model (e.g., in the case of a large language model), releasing of a model checkpoint, or other means that are appropriate to the research performed.
        \item While NeurIPS does not require releasing code, the conference does require all submissions to provide some reasonable avenue for reproducibility, which may depend on the nature of the contribution. For example
        \begin{enumerate}
            \item If the contribution is primarily a new algorithm, the paper should make it clear how to reproduce that algorithm.
            \item If the contribution is primarily a new model architecture, the paper should describe the architecture clearly and fully.
            \item If the contribution is a new model (e.g., a large language model), then there should either be a way to access this model for reproducing the results or a way to reproduce the model (e.g., with an open-source dataset or instructions for how to construct the dataset).
            \item We recognize that reproducibility may be tricky in some cases, in which case authors are welcome to describe the particular way they provide for reproducibility. In the case of closed-source models, it may be that access to the model is limited in some way (e.g., to registered users), but it should be possible for other researchers to have some path to reproducing or verifying the results.
        \end{enumerate}
    \end{itemize}

\item {\bf Open access to data and code}
    \item[] Question: Does the paper provide open access to the data and code, with sufficient instructions to faithfully reproduce the main experimental results, as described in supplemental material?
    \item[] Answer: \answerYes{} 
    \item[] Justification: \weprovidecode
    \item[] Guidelines:
    \begin{itemize}
        \item The answer NA means that paper does not include experiments requiring code.
        \item Please see the NeurIPS code and data submission guidelines (\url{https://nips.cc/public/guides/CodeSubmissionPolicy}) for more details.
        \item While we encourage the release of code and data, we understand that this might not be possible, so “No” is an acceptable answer. Papers cannot be rejected simply for not including code, unless this is central to the contribution (e.g., for a new open-source benchmark).
        \item The instructions should contain the exact command and environment needed to run to reproduce the results. See the NeurIPS code and data submission guidelines (\url{https://nips.cc/public/guides/CodeSubmissionPolicy}) for more details.
        \item The authors should provide instructions on data access and preparation, including how to access the raw data, preprocessed data, intermediate data, and generated data, etc.
        \item The authors should provide scripts to reproduce all experimental results for the new proposed method and baselines. If only a subset of experiments are reproducible, they should state which ones are omitted from the script and why.
        \item At submission time, to preserve anonymity, the authors should release anonymized versions (if applicable).
        \item Providing as much information as possible in supplemental material (appended to the paper) is recommended, but including URLs to data and code is permitted.
    \end{itemize}

\item {\bf Experimental setting/details}
    \item[] Question: Does the paper specify all the training and test details (e.g., data splits, hyperparameters, how they were chosen, type of optimizer, etc.) necessary to understand the results?
    \item[] Answer: \answerYes{} 
    \item[] Guidelines:
    \begin{itemize}
        \item The answer NA means that the paper does not include experiments.
        \item The experimental setting should be presented in the core of the paper to a level of detail that is necessary to appreciate the results and make sense of them.
        \item The full details can be provided either with the code, in appendix, or as supplemental material.
    \end{itemize}

\item {\bf Experiment statistical significance}
    \item[] Question: Does the paper report error bars suitably and correctly defined or other appropriate information about the statistical significance of the experiments?
    \item[] Answer: \answerYes{} 
    \item[] Justification: Yes, we report sample standard deviation error bars in the plots (and in the result files in the supplementary material). These correspond to the randomness used by the algorithm.
    \item[] Guidelines:
    \begin{itemize}
        \item The answer NA means that the paper does not include experiments.
        \item The authors should answer "Yes" if the results are accompanied by error bars, confidence intervals, or statistical significance tests, at least for the experiments that support the main claims of the paper.
        \item The factors of variability that the error bars are capturing should be clearly stated (for example, train/test split, initialization, random drawing of some parameter, or overall run with given experimental conditions).
        \item The method for calculating the error bars should be explained (closed form formula, call to a library function, bootstrap, etc.)
        \item The assumptions made should be given (e.g., Normally distributed errors).
        \item It should be clear whether the error bar is the standard deviation or the standard error of the mean.
        \item It is OK to report 1-sigma error bars, but one should state it. The authors should preferably report a 2-sigma error bar than state that they have a 96\% CI, if the hypothesis of Normality of errors is not verified.
        \item For asymmetric distributions, the authors should be careful not to show in tables or figures symmetric error bars that would yield results that are out of range (e.g. negative error rates).
        \item If error bars are reported in tables or plots, The authors should explain in the text how they were calculated and reference the corresponding figures or tables in the text.
    \end{itemize}

\item {\bf Experiments compute resources}
    \item[] Question: For each experiment, does the paper provide sufficient information on the computer resources (type of compute workers, memory, time of execution) needed to reproduce the experiments?
    \item[] Answer: \answerYes{} 
    \item[] Justification: The experiments can be run on commodity hardware (CPU only, single-threaded) and take several hours to finish. We mention this in \cref{sec:experiments}.
    \item[] Guidelines:
    \begin{itemize}
        \item The answer NA means that the paper does not include experiments.
        \item The paper should indicate the type of compute workers CPU or GPU, internal cluster, or cloud provider, including relevant memory and storage.
        \item The paper should provide the amount of compute required for each of the individual experimental runs as well as estimate the total compute. 
        \item The paper should disclose whether the full research project required more compute than the experiments reported in the paper (e.g., preliminary or failed experiments that didn't make it into the paper). 
    \end{itemize}
    
\item {\bf Code of ethics}
    \item[] Question: Does the research conducted in the paper conform, in every respect, with the NeurIPS Code of Ethics \url{https://neurips.cc/public/EthicsGuidelines}?
    \item[] Answer: \answerYes{} 
    \item[] Guidelines:
    \begin{itemize}
        \item The answer NA means that the authors have not reviewed the NeurIPS Code of Ethics.
        \item If the authors answer No, they should explain the special circumstances that require a deviation from the Code of Ethics.
        \item The authors should make sure to preserve anonymity (e.g., if there is a special consideration due to laws or regulations in their jurisdiction).
    \end{itemize}

\item {\bf Broader impacts}
    \item[] Question: Does the paper discuss both potential positive societal impacts and negative societal impacts of the work performed?
    \item[] Answer: \answerYes{} 
    \item[] Justification: Yes, in \cref{sec:conclusion}.
    \item[] Guidelines:
    \begin{itemize}
        \item The answer NA means that there is no societal impact of the work performed.
        \item If the authors answer NA or No, they should explain why their work has no societal impact or why the paper does not address societal impact.
        \item Examples of negative societal impacts include potential malicious or unintended uses (e.g., disinformation, generating fake profiles, surveillance), fairness considerations (e.g., deployment of technologies that could make decisions that unfairly impact specific groups), privacy considerations, and security considerations.
        \item The conference expects that many papers will be foundational research and not tied to particular applications, let alone deployments. However, if there is a direct path to any negative applications, the authors should point it out. For example, it is legitimate to point out that an improvement in the quality of generative models could be used to generate deepfakes for disinformation. On the other hand, it is not needed to point out that a generic algorithm for optimizing neural networks could enable people to train models that generate Deepfakes faster.
        \item The authors should consider possible harms that could arise when the technology is being used as intended and functioning correctly, harms that could arise when the technology is being used as intended but gives incorrect results, and harms following from (intentional or unintentional) misuse of the technology.
        \item If there are negative societal impacts, the authors could also discuss possible mitigation strategies (e.g., gated release of models, providing defenses in addition to attacks, mechanisms for monitoring misuse, mechanisms to monitor how a system learns from feedback over time, improving the efficiency and accessibility of ML).
    \end{itemize}
    
\item {\bf Safeguards}
    \item[] Question: Does the paper describe safeguards that have been put in place for responsible release of data or models that have a high risk for misuse (e.g., pretrained language models, image generators, or scraped datasets)?
    \item[] Answer: \answerNA{} 
    \item[] Guidelines:
    \begin{itemize}
        \item The answer NA means that the paper poses no such risks.
        \item Released models that have a high risk for misuse or dual-use should be released with necessary safeguards to allow for controlled use of the model, for example by requiring that users adhere to usage guidelines or restrictions to access the model or implementing safety filters. 
        \item Datasets that have been scraped from the Internet could pose safety risks. The authors should describe how they avoided releasing unsafe images.
        \item We recognize that providing effective safeguards is challenging, and many papers do not require this, but we encourage authors to take this into account and make a best faith effort.
    \end{itemize}

\item {\bf Licenses for existing assets}
    \item[] Question: Are the creators or original owners of assets (e.g., code, data, models), used in the paper, properly credited and are the license and terms of use explicitly mentioned and properly respected?
    \item[] Answer: \answerYes{} 
    \item[] Guidelines:
    \begin{itemize}
        \item The answer NA means that the paper does not use existing assets.
        \item The authors should cite the original paper that produced the code package or dataset.
        \item The authors should state which version of the asset is used and, if possible, include a URL.
        \item The name of the license (e.g., CC-BY 4.0) should be included for each asset.
        \item For scraped data from a particular source (e.g., website), the copyright and terms of service of that source should be provided.
        \item If assets are released, the license, copyright information, and terms of use in the package should be provided. For popular datasets, \url{paperswithcode.com/datasets} has curated licenses for some datasets. Their licensing guide can help determine the license of a dataset.
        \item For existing datasets that are re-packaged, both the original license and the license of the derived asset (if it has changed) should be provided.
        \item If this information is not available online, the authors are encouraged to reach out to the asset's creators.
    \end{itemize}

\item {\bf New assets}
    \item[] Question: Are new assets introduced in the paper well documented and is the documentation provided alongside the assets?
    \item[] Answer: \answerNA{} 
    \item[] Justification: We release no new assets (except code, which is discussed elsewhere).
    \item[] Guidelines:
    \begin{itemize}
        \item The answer NA means that the paper does not release new assets.
        \item Researchers should communicate the details of the dataset/code/model as part of their submissions via structured templates. This includes details about training, license, limitations, etc. 
        \item The paper should discuss whether and how consent was obtained from people whose asset is used.
        \item At submission time, remember to anonymize your assets (if applicable). You can either create an anonymized URL or include an anonymized zip file.
    \end{itemize}

\item {\bf Crowdsourcing and research with human subjects}
    \item[] Question: For crowdsourcing experiments and research with human subjects, does the paper include the full text of instructions given to participants and screenshots, if applicable, as well as details about compensation (if any)? 
    \item[] Answer: \answerNA{} 
    \item[] Guidelines:
    \begin{itemize}
        \item The answer NA means that the paper does not involve crowdsourcing nor research with human subjects.
        \item Including this information in the supplemental material is fine, but if the main contribution of the paper involves human subjects, then as much detail as possible should be included in the main paper. 
        \item According to the NeurIPS Code of Ethics, workers involved in data collection, curation, or other labor should be paid at least the minimum wage in the country of the data collector. 
    \end{itemize}

\item {\bf Institutional review board (IRB) approvals or equivalent for research with human subjects}
    \item[] Question: Does the paper describe potential risks incurred by study participants, whether such risks were disclosed to the subjects, and whether Institutional Review Board (IRB) approvals (or an equivalent approval/review based on the requirements of your country or institution) were obtained?
    \item[] Answer: \answerNA{} 
    \item[] Guidelines:
    \begin{itemize}
        \item The answer NA means that the paper does not involve crowdsourcing nor research with human subjects.
        \item Depending on the country in which research is conducted, IRB approval (or equivalent) may be required for any human subjects research. If you obtained IRB approval, you should clearly state this in the paper. 
        \item We recognize that the procedures for this may vary significantly between institutions and locations, and we expect authors to adhere to the NeurIPS Code of Ethics and the guidelines for their institution. 
        \item For initial submissions, do not include any information that would break anonymity (if applicable), such as the institution conducting the review.
    \end{itemize}

\item {\bf Declaration of LLM usage}
    \item[] Question: Does the paper describe the usage of LLMs if it is an important, original, or non-standard component of the core methods in this research? Note that if the LLM is used only for writing, editing, or formatting purposes and does not impact the core methodology, scientific rigorousness, or originality of the research, declaration is not required.
    \item[] Answer: \answerNA{} 
    \item[] Guidelines:
    \begin{itemize}
        \item The answer NA means that the core method development in this research does not involve LLMs as any important, original, or non-standard components.
        \item Please refer to our LLM policy (\url{https://neurips.cc/Conferences/2025/LLM}) for what should or should not be described.
    \end{itemize}

\end{enumerate}

\end{document}